\documentclass[preprint,a4paper,10pt]{article}

\usepackage{amssymb}
\usepackage{amsfonts}
\usepackage{graphicx}%[dvips]
\usepackage{amsmath}
\usepackage{latexsym}
\usepackage{amssymb}
\usepackage{pstricks}
\usepackage{amsmath}
\usepackage{graphics}
\usepackage{mathrsfs}

\usepackage{fancyhdr}

%\usepackage{auto-pst-pdf}

%%%%%%%%%%%%%%%%%%%%%%%%%%%%%%%%%%%%%%%%%%%%%%%%%%%%%%%%%%%

\addtolength{\hoffset}{-1.5cm} \addtolength{\textwidth}{3cm}
\addtolength{\voffset}{-2cm} \addtolength{\textheight}{4cm}

%%%%%%%%%%%%%%%%%%%%%%%%%%%%%%%%%%%%%%%%%%%%%%%%%%%%%%%%%%%

%\newtheorem{lemma}{lemma}[subsubsection]
\newtheorem{proof}{Proof}[subsubsection]
\newtheorem{theorem}{Theorem}
\newtheorem{definition}[theorem]{Definition}

\newtheorem{lemma}[theorem]{Lemma}
\newtheorem{remark}[theorem]{Remark}

\newcommand{\om}{\Omega}

\newcommand{\dt}{\mbox{ }dt}

\newcommand{\dc}{\|}

\newcommand{\bfz}{\mbox{\boldmath{$z$}}}

\newcommand{\bfU}{\mbox{\boldmath{$U$}}}

\newcounter{constants}
\setcounter{constants}{0}

%%%%%%%%%%%%%%%%%%%%%%%%%%%%%%%%%%%%%%%%%%%%%%%%%%%%%%%%%%%%%%%%%%%%%%

%\baselineskip= 8pt
%\pagestyle{empty}
%\thispagestyle{empty}

\begin{document}

\title{Global weak solutions for coupled transport processes
in concrete walls at high temperatures}

\author{
{Michal Bene\v{s}$^1$ and Radek \v{S}tefan$^2$}
\smallskip
\\
{\small $^1$Department of Mathematics}
\\
{\small $^2$Department of Concrete and Masonry Structures}
\smallskip
\\
{\small Faculty of Civil Engineering}
\\
{\small Czech Technical University in Prague}
\\
{\small Th\'{a}kurova 7, 166 29 Prague 6, Czech Republic}
\\
{\small e-mail: benes@mat.fsv.cvut.cz}}

\date{ }

\maketitle

\begin{abstract}

We consider an initial-boundary value problem for a fully nonlinear
coupled parabolic system with nonlinear boundary conditions
modelling hygro-thermal behavior of concrete at high temperatures.
We prove a global existence of a weak solution to this system on an
arbitrary time interval. The main result is proved by an
approximation procedure. This consists in proving the existence of
solutions to mollified problems using the Leray-Schauder theorem,
for which a priori estimates are obtained. The limit then provides a
weak solution for the original problem. A practical example
illustrates a performance of the model for a problem of a concrete
segment exposed to transient heating according to three different
fire scenarios. Here, the focus is on the short-term pore pressure
build up, which can lead to explosive spalling of concrete and
catastrophic failures of concrete structures.

\end{abstract}

\section{Introduction}\label{Intro}

 The hygro-thermal behavior
of concrete plays a crucial role in the assessment of the
reliability and lifetime of concrete structures. The impact of heat
and mass transfer processes become particularly evident at high
temperatures, where the increased pressure in pores, large
temperature gradients and temperature induced-creep may lead to
catastrophic service failures. Examples of such severe situations
include hypothetical nuclear reactor accidents \cite{Shekarchi2003},
fires in tunnels and tall buildings \cite{Sch2002} or simulation of
airfield concrete pavements \cite{Ju1998}. Since high-temperature
experiments are very expensive, predictive modeling of humidity
migration, pore pressure development and damage distribution can
result in significant economic savings. The first mathematical
models of concrete exposed to temperatures exceeding $100^{\circ}$C
were formulated by Ba\v{z}ant and~Thonguthai in \cite{BaTh1978}.
Since then, a considerable effort has been invested into detailed
numerical simulations of concrete structures subject to high
temperatures.
However, much less attention has been given to the
qualitative properties of the model.

Let us briefly present the single-fluid-phase, purely
macroscopic, model for prediction of hygro-thermal
behavior of heated concrete. The one-dimensional heat and mass transport in concrete wall
is governed by the following parabolic system \cite{Ba1997,BaTh1978,BaKa1996}:
\smallskip
\\
\emph{energy conservation equation for concrete:}
\begin{equation}\label{energy temp}
(C_w w \theta + \rho_S C_S \theta)_t =
(\Lambda(\theta,P)\theta_x)_x
 + (C_w \kappa(\theta,P)\theta P_x)_x - h_d d_t;
\end{equation}
\emph{water content conservation equation:}
\begin{equation}\label{conservation_mass}
w_t  = ( \kappa(\theta,P)P_x)_x + d_t.
\end{equation}

The primary unknowns in the model are temperature $\theta$, water
content $w$ (the mass of all free evaporable water per m$^3$ of
concrete) and pore pressure $P$. In order to complete the system
\eqref{energy temp}--\eqref{conservation_mass}, let us note that the
water content $w$ is connected with temperature $T$ and pore
pressure $P$ via sorption isotherms $\Phi=P/P_{sat}=\Phi(\theta,w)$
($\Phi$ denotes the relative humidity and $P_{sat}$ is saturation
vapor pressure), which need to be determined experimentally. The
thermal conductivity $\Lambda$ and permeability $\kappa$ are assumed
to be positive smooth functions of their arguments. Further, $C_w$
represents the isobaric heat capacity of moisture, $\rho_S$ and
$C_S$, respectively, are the mass density and the isobaric heat
capacity of solid microstructure (excluding hydrate water), $d$ is
the total mass of the free water released in the pores by
dehydration and $h_{d}$ denotes the enthalpy of dehydration per unit
mass.

To complete the model,
one should prescribe the appropriate boundary conditions across the
boundary and initial conditions on primary unknowns.
In the case of heat transfer through the boundary, the Neumann type
or radiative (for heat flux) boundary conditions,
respectively, are frequently under consideration in practice:
\begin{eqnarray}
-\Lambda(\theta,P)\theta_x \cdot  n &=&
\alpha_c(\theta-\theta_{\infty}), \label{neumann a}
\\
-\Lambda(\theta,P)\theta_x \cdot  n &=&
\alpha_c(\theta-\theta_{\infty})+e\sigma
(\theta^4 - \theta^4_{\infty}), \label{radiative a}
\end{eqnarray}
respectively, in which $\alpha_c$ designates the film coefficient for
the heat transfer, and $\theta_\infty$ is temperature of the environment.
The last expression in \eqref{radiative a} expresses the radiative
contribution to the heat flux, quantified by the Stefan-Boltzmann law in terms
of the relative surface emissivity $e$ and the Stefan-Boltzmann constant
$\sigma$ and the temperature difference $( \theta^4 -
\theta_\infty^4)$.

 The humidity flux across the boundary is
quantified by the Newton law:
\begin{equation}\label{neumann p}
-\kappa(\theta,P)P_x \cdot n = \beta_c (P - P_{\infty}),
\end{equation}
where the right hand side represents the humidity dissipated into the
surrounding medium with the water vapor pressure in the environment $P_\infty$
and $\beta_c$ represents the surface emissivity of water.

\medskip

The system
\eqref{energy temp}--\eqref{conservation_mass}, accompanied by the
appropriate boundary and initial conditions, represents a
challenging mathematical problem. Models of hygro-thermal transport
processes in concrete are associated to systems of doubly and
strongly nonlinear parabolic equations of type (written in operator
form)
$$
\partial_t \mathcal{B}(\bfU)-\nabla\cdot
\mathcal{A}(\bfU,\nabla \bfU)=\mathcal{F}(\bfU,\nabla \bfU).
$$
Although there is no general existence and regularity theory for
such problems, some partial outcomes assuming special structure of
operators $\mathcal{A}$ and $\mathcal{B}$ and growth
conditions on $\mathcal{F}$ can be found in
the literature. The existence of weak solutions subject to mixed
boundary conditions with homogeneous Neumann boundary conditions has
been shown by Alt \& Luckhaus in~\cite{AltLuckhaus1983}. They
obtained an existence result assuming the operator $\mathcal{B}$ in the
parabolic part to be only (weak) monotone and subgradient. This
result has been extended e.g. by Filo \& Ka\v{c}ur in
\cite{FiloKacur1995}, who proved the local existence of the weak
solution for the system with nonlinear Neumann boundary conditions
and under more general growth conditions on nonlinearities in $\bfU$.
These results, however, are not applicable if $\mathcal{B}$ does not take the
subgradient form, which is typical of coupled heat and mass
transport models. Thus, the analysis needs to exploit the specific
structure of the problem. In this context, Vala in \cite{Vala2002}
proved the existence of solution to the purely diffusive
hygro-thermal model with non-symmetry in the parabolic part but with
unrealistic symmetry in the elliptic term. In \cite{LiSun2010} and
\cite{LiSunWang2010}, Li~\emph{et al.} study a coupled model for
heat and mass transport arising from textile industry, which is
described by a degenerate and strongly coupled parabolic system.
They prove the global existence for one-dimensional problem using
the Leray-Schauder fixed point theorem.

Additional results are available for the time discrete setting.
Dal\'{i}k \emph{et al.}~\cite{DaDaVa2000} analyzed the numerical
solution of the Kiessl model for moisture and heat transfer in
porous materials. They proved some existence and regularity results
and suggested an efficient numerical approach to the solution of the
resulting system of highly non-liner equations. However, the Kiessl
model is valid for limited temperature range only and as such it is
inappropriate for high-temperature applications. In
\cite{Benes2011}, Bene\v{s} \emph{et al.} extended the
work~\cite{DaDaVa2000} by proving the existence of an approximate
solution for the Ba\v{z}ant--Thonguthai model, arising from the
semi-implicit discretization in time.

\smallskip

In the present paper we
prove a global-in-time existence of a weak solution to a fully nonlinear
coupled parabolic system with nonlinear boundary conditions
modeling heat and moisture transport in concrete walls at high temperatures based on
the single-fluid-phase model introduced by Ba\v{z}ant \& Thonguthai
\cite{BaTh1978}.  The main result is proved by means of a fixed point argument based
on the Leray-Schauder approach and approximation procedure and then carrying
out the passage to the limit.

\paragraph{Outline of the paper.}
The paper is organized in the following manner. In Section \ref{prel},
we introduce basic notation and the appropriate function spaces,
present the classical formulation of the problem under consideration
and specify our assumptions on data
and structure conditions in the model.
In Section~\ref{sec:main_result}, we formulate the problem in the
variational sense and state the main result of the paper -- the global-in-time
existence of the weak solution.
In Section \ref{sec:proof_main}, we derive the a priori estimates
for an approximate solution of the auxiliary regularized problem.
The solution of the regularized problem is obtained by the Leray-Schauder fixed point theorem.
Using the limiting procedure we prove the existence of the weak solution to the original problem.
In Section \ref{sec:numerical_exp}, numerical
experiments are performed to present the moisture migration,
temperature distribution and pore pressure build up in the model of
concrete wall one-side-exposed to various fire scenarios.

\section{Preliminaries}\label{prel}

\subsection{Notation and some function spaces}
\label{notation}

Vectors, vector functions and operators acting on vector functions
are denoted by~boldface letters. Throughout the paper, we will
always use positive constants $c$, $c_1$, $c_2$, $\dots$, which are
not specified and which may differ from line to line.
As usual, for a function $\phi=\phi(x,t)$,
$\phi_x$ and $\phi_t$ indicate the partial derivatives
with respect to spatial variable $x$ and temporal variable $t$.
 Let $T>0$ and $\ell >0$ be the fixed values, $\Omega = (0,\ell)$,
 $I=(0,T)$,
 $Q_{T}=\Omega\times I$, $Q_{t}=\Omega\times(0,t]$ ($t>0$).
 We denote by
$\mathbf{W}^{l,p}\equiv W^{l,p}(\Omega)^2$, $l\ge 0$ and $1 \leq p
\leq \infty$ and, especially, $\mathbf{L}^p\equiv\mathbf{W}^{0,p}$,
where $W^{k,p}(\Omega)$, $k\geq 0$, $p\in [1,+\infty]$, denotes the
usual Sobolev space with the norm $\|\cdot\|_{W^{k,p}(\Omega)}$ and,
in addition, $L^p(\Omega)$ denotes the usual Lebesgue space equipped
with the norm $\|\cdot\|_{L^p(\Omega)}$.

Let $X$ be an arbitrary Banach space with the norm $\|\cdot\|_X$
($X^*$ represents the dual space to $X$). Let $r\in[1,\infty)$. As
usual $L^r(I;X)$ and~ $L^{\infty}(I;X)$ denote the Banach spaces
$$
\Bigl\{\phi;\;\phi(t)\in X \mbox{ for almost every } t\in I,\,
\int_{0}^{T}\dc\phi(t)\dc_X^r\dt<\infty\Bigr\}
$$
and
$$
\Bigl\{\phi;\;\;\phi(t)\in X \mbox{ for almost every }
t\in I,\,\mbox{\rm ess}
\sup_{t\in I}\dc\phi(t)\dc_X <\infty\Bigr\}
$$
with the norms
$$
\dc\phi\dc_{L^p(I;X)} :=
\Bigl(\int_{0}^{T}\dc\phi(t)\dc_X^r\dt\Bigr)^{1/r}
$$
and
$$
\dc\phi\dc_{L^{\infty}(I;X)} := \mbox{\rm ess}
\sup_{t\in I}\dc\phi(t)\dc_X.
$$
Further we introduce the space $C(I;X)$ -- space of functions $\phi:[0,T]\rightarrow X$, continuous, for which
$$
\dc\phi\dc_{C(I;X)} := \mbox{\rm ess}
\sup_{t\in [0,T]}\dc\phi(t)\dc_X.
$$
Define the spaces
$$
{V}^{2,1}_2(Q_T)\equiv \left\{ \phi \in L^2(Q_T) |\, \phi_t,
\phi_x, \phi_{xx} \in L^2(Q_T)  \right\} \textmd{ and }\mathbf{V}^{2,1}_2(Q_T)\equiv {V}^{2,1}_2(Q_T)^2.
$$
We will often use the following well-known embeddings that are consequences of Aubin-Lions lemma
and interpolation inequalities:
\begin{eqnarray}
&&{V}^{2,1}_2(Q_T) \hookrightarrow\hookrightarrow
L^2(I;\mathbf{W}^{1,2}), \label{comb_emb01}
\\
&& L^2(I;W^{1,2}(\Omega)) \cap
L^{\infty}(I;L^2(\Omega)) \hookrightarrow L^6(Q_T),
\\
&&{V}^{2,1}_2(Q_T) \hookrightarrow C(\overline{Q}_T).
\end{eqnarray}

\subsection{Classical formulation of the problem}
\label{}

Incorporating the
relation  $P=P_{sat}(\theta)\Phi(\theta,w)\equiv \mathcal{P}(\theta,w)$ (via
sorption isotherms $\Phi=P/P_{sat}=\Phi(\theta,w)$) into the system \eqref{energy temp}--\eqref{neumann p}
we can eliminate the unknown field $P$ and consider the problem with
only two unknowns $\theta$ and $w$. Consequently,
the classical formulation of the problem we are going to study reads as follows:
\begin{align}
w_t - d_t & = \left( \delta_w(\theta,w)w_x\right)_x + \left(
\delta_{\theta}(\theta,w)\theta_x\right)_x  && {\rm in}\;
Q_{T},\label{eq1b}
\\
(C_w w \theta + \rho_S C_S \theta)_t + h_d d_t &=
(\lambda(\theta,w)\theta_x)_x
%\\
%& \hspace{-0.5cm}
+ C_w (\theta  \left( \delta_w(\theta,w)w_x +
\delta_{\theta}(\theta,w)\theta_x\right)  )_x  &&{\rm in} \; Q_{T},
\\
d_t &= -\frac{1}{\tau}(d-d_{eq}(\theta))  &&{\rm in} \; Q_{T}, \label{eq1b_dehydr}
\\
[\delta_w(\theta,w)w_x + \delta_{\theta}(\theta,w)\theta_x ]\Big|_{x=0} &=
0 && {\rm in} \; I,
\\
[\lambda(\theta,w)\theta_x  ]\Big|_{x=0}  &= 0 && {\rm in} \;  I,
\\
[-\delta_w(\theta,w)w_x - \delta_{\theta}(\theta,w)\theta_x
 ]\Big|_{x=\ell}
&= \beta_c \left(\mathcal{P}(\theta(\ell,t),w(\ell,t)) -  P_{\infty}\right) && {\rm in} \; I,
\label{bc_w}
\\
[-\lambda(\theta,w)\theta_x  ]\Big|_{x=\ell}  &=
(\alpha_c+e\sigma|\theta(\ell,t)|^3)\theta(\ell,t)
 - \vartheta(t) \!\! && {\rm in} \;  I, \label{bc_radiative}
\\
d(x,0)&=0 &&{\rm in} \;  \Omega,
\\
\theta(x,0)&= \theta_0(x) &&{\rm in} \;  \Omega,
\\
w(x,0)&= w_0(x) &&{\rm in} \;  \Omega.\label{eq10b}
\end{align}
Here we assume that all functions are smooth enough.
The unknowns in the proposed model are temperature $\theta$, water
content $w$ and the function $d$ (the total mass of the free water released
in the pores by dehydration).
$\lambda$, $\delta_{\theta}$ and $\delta_{w}$ are diffusion
coefficient functions depending non-linearly on $\theta$ and $w$. In
\eqref{eq1b_dehydr} $\tau$ represents the characteristic time of
dehydration and the function $d_{eq}=d_{eq}(\theta)$ defines the
water mass created by dehydration at equilibrium at the given
temperature $\theta$ \cite{PontEhrlacher}.
In \eqref{bc_radiative} the function $\vartheta$ expresses thermal loading given by (cf. \eqref{radiative a})
\begin{equation}
\vartheta(t):=\alpha_c \theta_{\infty}(t)+ e\sigma\theta^4_{\infty}(t).
\end{equation}
Finally, $\theta_0$ and $w_0$ represent the initial distributions of the primary
unknowns $\theta$ and $w$, respectively.

\subsection{Structural conditions and assumptions on physical parameters}
\label{assumptions}

Here we specify assumptions on material coefficients and data in the model \eqref{eq1b}--\eqref{eq10b}.

\begin{itemize}
\item[$A_1$]
We assume that the material parameters $\rho_S$, $C_S$, $C_w$, $h_d$, $\tau$,
$\alpha_c$, $\beta_c$, $P_{\infty}$,
$\sigma$ and $e$ are real positive constants.
\item[$A_2$]   $\delta_{\theta}$, $\delta_{w}$ and $\lambda$ are $C^1$ functions,
$\delta_{\theta}$, $\delta_{w}$, $\lambda:\mathbb{R}^2\rightarrow \mathbb{R}$,
satisfying for certain
positive constants $\delta_0$, $\delta_1$, $\delta_2$, $\lambda_0$, $\lambda_1$ and $\lambda_2$
and for all $\bfz=[z_1,z_2] \in \mathbb{R}^2$
\begin{eqnarray}
0 < \delta_0 \leq \delta_w(\bfz) \leq \delta_1 < +\infty, &&
\label{con11a}
\\
|\partial_{z_1}\delta_w(\bfz)|,|\partial_{z_2}\delta_w(\bfz)| \leq \delta_2 < +\infty,  &&
\label{con11b}
\\
0 < \delta_0 \leq \delta_{\theta}(\bfz) \leq \delta_1 < +\infty, &&
\label{con11bb}
\\
%{\red \delta_{\theta}(\bfz)\cdot z_2 \geq 0}
%,
|\delta_{\theta}(\bfz)| \leq c|z_2|, && \label{con11c}
\\
|\partial_{z_1}\delta_{\theta}(\bfz)|,|\partial_{z_2}\delta_{\theta}(\bfz)| \leq \delta_2 < +\infty, &&
\label{con11d}
\\
0<\lambda_0\leq \lambda(\bfz) \leq \lambda_1 < +\infty, &&
\label{con11e}
\\
 |\partial_{z_1}\lambda(\bfz)|,|\partial_{z_2}\lambda(\bfz)| \leq \lambda_2 < +\infty. &&
\label{con11f}
\end{eqnarray}

\item[$A_3$]
$d_{eq}:\mathbb{R}\rightarrow \mathbb{R}$ is a function of class $C^1$ and there exist positive constants $d_1$ and $d_2$ such that for every $z \in \mathbb{R}$
\begin{eqnarray}
0 \leq d_{eq}(z) \leq d_1 < +\infty, &&
\label{con12a}
\\
|d'_{eq}(z)| \leq d_2 < +\infty.
\label{con12b}
\end{eqnarray}

\item[$A_4$]
We assume $\mathcal{P}:\mathbb{R}^2\rightarrow \mathbb{R}$ to be a Lipschitz
continuous function in $\mathbb{R}^2$ such that $\mathcal{P}(\xi_1,\xi_2)
\cdot \xi_2 \geq 0$ for every $[\xi_1,\xi_2] \in \mathbb{R}^2$.

\item[$A_5$]
$\vartheta$ is a positive continuous function in $[0,T]$.

\item[$A_6$]
$\theta_0$ and $w_0$ are positive continuous functions in $\overline{\Omega}$.

\end{itemize}

%%%%%%%%%%%%%%%%%%%%%%%%%%%%%%%%%%%%%%%%%%%%%%%%%%%%%%%%%%%%%%%%%%%%%%%
\section{Main result}\label{sec:main_result}

The aim of this paper is to prove the existence of a weak solution to the problem described
by the system \eqref{eq1b}--\eqref{eq10b}.
Now we formulate our problem in the variational sense.
\begin{definition}\label{def_weak_solution}
The triplet $[\theta,w,d]\in L^2(I;\mathbf{W}^{1,2})\times
C(I;L^{\infty}(\Omega))$ is called the weak solution to the system
\eqref{eq1b}--\eqref{eq10b} iff
\begin{align}\label{weak_form}
&-\int_{Q_T} \left( w - d \right) \phi_t +\left( C_w \theta w +
\rho_S C_S \theta + h_d d \right) \psi_t {\rm d}x{\rm d}t
\nonumber\\
& \qquad +\int_{Q_T} \left( \delta_w(\theta,w) w_x +
\delta_{\theta}(\theta,w) \theta_x \right) \phi_x  {\rm d}x{\rm d}t
+\int_0^T \beta_c \left(\mathcal{P}(\theta(\ell,t),w(\ell,t)) -  P_{\infty}\right) \phi(\ell,t) \;{\rm d}t
\nonumber
\\
& \qquad +\int_{Q_T}  \lambda(\theta,w) \theta_x \psi_x \;{\rm d}x{\rm d}t
%
% \nonumber\\
%
% & \qquad
-\int_{Q_T} \theta\left( \delta_w(\theta,w) w_x +
\delta_{\theta}(\theta,w)\theta_x \right) \psi_x \;{\rm d}x{\rm d}t
\nonumber\\
& \qquad +\int_0^T \!\! \left[(\alpha_c+e\sigma|\theta(\ell,t)|^3)\theta(\ell,t)
 - \vartheta(t) +  C_w \theta(\ell,t) \beta_c \left(\mathcal{P}(\theta(\ell,t),w(\ell,t)) -  P_{\infty}\right)\right] \psi(\ell,t)   {\rm d}t
\nonumber\\
&  = \int_{\Omega} (w_0-d_0)\phi(x,0) + \left( C_w \theta_0 w_0 + \rho_S
C_S \theta_0 + h_d d_0 \right)\psi(x,0) {\rm d}x
\end{align}
holds for all test functions $[\phi,\psi]\in
C^{\infty}(\overline{Q}_T)$, $\phi(x,T)=\psi(x,T)=0$ $\forall x \in
\Omega$. Here $d$ satisfies the equation
\begin{equation}\label{weak_solution_d}
d(x,t) = \frac{1}{\tau}\int_0^t e^{(s-t)/{\tau}} d_{eq}(\theta(x,s)){\rm d}s
\end{equation}
for every $t \in (0,T)$ and almost every $x \in \Omega$.
\end{definition}

\begin{remark}
{The coupled problem described by the system \eqref{eq1b}--\eqref{eq10b},
being intensively used in practise, does not seem analytically studied so far.
To the best of our knowledge, there is no existence result for
the presented model available.}
\end{remark}

The main result of this paper reads:
\begin{theorem}[Main result]\label{main_result}
Let the assumptions $A_1$--$A_6$ be satisfied. Then there exists at
least one weak solution $[\theta,w,d]\in
L^2(I;\mathbf{W}^{1,2})\times C(I;L^{\infty}(\Omega))$ of the
system \eqref{eq1b}--\eqref{eq10b} in the sense of Definition
\ref{def_weak_solution}.
\end{theorem}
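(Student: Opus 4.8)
The strategy follows the roadmap announced in the introduction: regularize, solve the regularized problem by a Leray--Schauder fixed point argument, derive a priori estimates uniform in the regularization parameter, and pass to the limit. First I would introduce a mollification parameter $\varepsilon>0$ and replace the troublesome nonlinearities — in particular the coefficient $\delta_\theta(\theta,w)$ appearing in the cross-diffusion term, the product $C_w\theta w$ in the parabolic part of the energy equation, and the boundary terms of the form $|\theta(\ell,t)|^3\theta(\ell,t)$ and $\mathcal P(\theta(\ell,t),w(\ell,t))$ — by truncated/mollified versions $\delta_\theta^\varepsilon$, etc., chosen so that (i) they are bounded and Lipschitz with constants possibly depending on $\varepsilon$, (ii) they preserve the sign conditions ($\mathcal P^\varepsilon(\xi_1,\xi_2)\xi_2\ge 0$, positivity of $\vartheta$), and (iii) they converge locally uniformly to the originals as $\varepsilon\to 0$. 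I would keep $d$ expressed through the Duhamel formula \eqref{weak_solution_d}, which is an explicit (affine, Lipschitz) function of $\theta$, so that $d$ is not an independent unknown in the fixed point; this is legitimate because \eqref{eq1b_dehydr} with $d(x,0)=0$ integrates exactly to \eqref{weak_solution_d}.

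For fixed $\varepsilon$, I would set up the fixed point in a space like $X=L^2(I;\mathbf W^{1,2})\cap C(\overline Q_T)$ (or a bounded, closed, convex subset thereof), defining a map $\Phi\colon(\tilde\theta,\tilde w)\mapsto(\theta,w)$ where $(\theta,w)$ solves the \emph{linear} (in $(\theta,w)$) parabolic system obtained by freezing the coefficients at $(\tilde\theta,\tilde w)$: the diffusion matrix becomes $(\tilde\theta,\tilde w)$-dependent but bounded and uniformly elliptic by $A_2$, the lower-order and boundary data become fixed $L^2$/continuous functions, and $d$ is computed from $\tilde\theta$ by \eqref{weak_solution_d}. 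Existence and uniqueness for this linear problem with mixed Neumann/Robin boundary conditions is standard (Galerkin plus energy estimates, using \eqref{con11a}--\eqref{con11f} and $A_1$), giving a solution in $\mathbf V^{2,1}_2(Q_T)$; by the embeddings \eqref{comb_emb01}--\eqref{comb_emb01} and the compact embedding ${V}^{2,1}_2(Q_T)\hookrightarrow\hookrightarrow L^2(I;\mathbf W^{1,2})$ the map $\Phi$ is compact, and continuity of $\Phi$ follows from continuous dependence of the linear solve on its coefficients/data. The Leray--Schauder theorem then requires an a priori bound on all $(\theta,w)=\sigma\,\Phi(\theta,w)$, $\sigma\in[0,1]$; these I would obtain by testing the $\theta$-equation with $\theta$ (or with $1$ to get an $L^\infty$ bound via a maximum-principle / Stampacchia truncation argument, exploiting $\mathcal P^\varepsilon\cdot\theta\ge 0$ and $\vartheta>0$ to control the boundary term), and the $w$-equation with $w$, absorbing the cross terms using \eqref{con11c} and Young's inequality, with Gronwall in time; crucially these bounds are \emph{independent of $\varepsilon$ and $\sigma$}, which is exactly what makes the limit work.

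The third stage is the uniform (in $\varepsilon$) a priori estimates for the regularized solutions $(\theta_\varepsilon,w_\varepsilon,d_\varepsilon)$ — this is the heart of Section \ref{sec:proof_main}. Testing with $(\theta_\varepsilon,w_\varepsilon)$ and using $A_1$--$A_6$ I expect to get $(\theta_\varepsilon,w_\varepsilon)$ bounded in $L^2(I;\mathbf W^{1,2})\cap L^\infty(I;\mathbf L^2)$, hence in $\mathbf L^6(Q_T)$ by the stated embedding; then $d_\varepsilon$ is bounded in $C(I;L^\infty(\Omega))$ directly from \eqref{weak_solution_d} and \eqref{con12a}. From the equations one reads off bounds on the time derivatives $\partial_t(w_\varepsilon-d_\varepsilon)$ and $\partial_t(C_w\theta_\varepsilon w_\varepsilon+\rho_SC_S\theta_\varepsilon+h_d d_\varepsilon)$ in a negative-order space (e.g. $L^2(I;(\mathbf W^{1,2})^*)$ or $L^{4/3}$ of it), which by an Aubin--Lions / Lions--Aubin--Simon compactness argument yields strong convergence of $w_\varepsilon-d_\varepsilon$ and of the "enthalpy" variable in $L^2(Q_T)$, and (with a trace theorem and the $L^6$ bound controlling the $|\theta|^3\theta$ boundary term) enough to pass to the limit in every term of the weak formulation \eqref{weak_form}, including identification of the nonlinear limits $\delta_\theta^\varepsilon(\theta_\varepsilon,w_\varepsilon)\to\delta_\theta(\theta,w)$ a.e. and the recovery of $\theta$, $w$ separately from the enthalpy via monotonicity of $s\mapsto C_w s\,w+\rho_SC_S s$ for fixed $w\ge 0$.

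\textbf{Main obstacle.} The delicate point — and where I would spend most of the effort — is decoupling $\theta$ and $w$ from the two conserved quantities $w-d$ and $C_w\theta w+\rho_SC_S\theta+h_d d$ in the limit: compactness gives convergence of these combinations, but to pass to the limit in $\delta_\theta(\theta,w)\theta_x$, in $\lambda(\theta,w)\theta_x$, and in the cubic boundary nonlinearity one needs the individual a.e. convergence of $\theta_\varepsilon$ and $w_\varepsilon$. This requires a careful argument: first extract a.e. convergence of $d_\varepsilon$ (e.g. via \eqref{weak_solution_d} together with a.e. convergence of $\theta_\varepsilon$, so there is a slight circularity to untangle — one bootstraps from the strong $L^2$ convergence of $w_\varepsilon-d_\varepsilon$ and the energy variable, using that $d$ is a compact [smoothing in $t$] function of $\theta$), then solve the $2\times2$ algebraic system pointwise. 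A secondary difficulty is the lack of a non-degeneracy/positivity mechanism forcing $w\ge 0$ and $\theta\ge 0$ in the limit, needed for $A_4$ and for the invertibility just mentioned; I would obtain this by a maximum-principle argument at the regularized level (test with the negative parts $\theta_\varepsilon^-$, $w_\varepsilon^-$), using the sign conditions in $A_4$, $A_5$, $A_6$, and pass the pointwise inequalities $\theta_\varepsilon\ge 0$, $w_\varepsilon\ge 0$ to the limit. Everything else is a matter of routine (if lengthy) estimates of the type indicated.
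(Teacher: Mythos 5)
Your overall strategy coincides with the paper's: mollification, Leray--Schauder for the regularized system with frozen coefficients, $\varepsilon$-uniform energy estimates, Aubin--Lions compactness, and recovery of $\theta$ from the strongly convergent enthalpy $C_w w\theta+\rho_S C_S\theta$ using the a.e.\ convergence of $w$ and the nonnegativity $w\ge 0$ (the paper likewise obtains $w\ge 0$ by testing with $w^-$ at the regularized level). Two concrete points in your plan, however, would fail as written.

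First, the basic energy estimate for $\theta$. Testing the temperature equation \eqref{eq31} by $\theta$ alone leaves the uncontrolled remainder $\tfrac{C_w}{2}\int_{Q_t} w_t\,\theta^2$, coming from the non-divergence parabolic part $(C_w w+\rho_S C_S)\theta_t$; since $w_t$ is only known in a negative-order space, Gronwall does not close. The paper's way out is to test the water-content equation \eqref{eq30} by $C_w\theta^2$ and the temperature equation by $2\theta$ and \emph{add}, so that the two dangerous terms recombine into the exact derivative $\int_{Q_t}\bigl(\theta^2(C_w w+\rho_S C_S)\bigr)_s$, cf.\ \eqref{eq65}. Your sketch omits this coupling of test functions, and without it (or an equivalent device) the a priori bounds \eqref{est1_theta_a}--\eqref{est1_theta_b} are not obtained.

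Second, the radiative boundary term. The interior bound in $L^6(Q_T)$ does not control the trace $\theta_\varepsilon(\ell,\cdot)$ well enough to pass to the limit in $\int_0^T|\theta_\varepsilon(\ell,t)|^3\theta_\varepsilon(\ell,t)\psi(\ell,t)\,{\rm d}t$: from $L^2(I;W^{1,2}(\Omega))\cap L^\infty(I;L^2(\Omega))$ one only gets $\theta_\varepsilon(\ell,\cdot)$ bounded in $L^4(I)$, which gives no equi-integrability of $|\theta_\varepsilon(\ell,\cdot)|^4$ and hence no weak $L^1(I)$ convergence of the quartic boundary nonlinearity. The paper needs the additional higher-order estimate of Lemma \ref{bound_theta_2} (test \eqref{eq30} by $C_w\theta^4$ and \eqref{eq31} by $4\theta^3$, the same pairing trick as above) to get $\theta_\varepsilon$ bounded in $L^\infty(I;L^4(\Omega))$, and then the interpolation \eqref{int_ineq_010} to place the trace in $L^p(I)$ for all $p<6$, which together with the a.e.\ convergence \eqref{conv_30} yields \eqref{conv_50} and lets the boundary term pass. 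This extra estimate is missing from your plan and is not a routine step. The remaining differences (mollifying the nonlinearities versus mollifying their arguments, as the paper does; treating the frozen system as fully linear versus the paper's sequential solve of the ODE for $d$, a quasilinear $w$-equation, and then a linear $\theta$-equation) are matters of implementation and do not affect the argument.
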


%%%%%%%%%%%%%%%%%%%%%%%%%%%%%%%%%%%%%%%%%%%%%%%%%%%%%%%%%%%%%%%%%%%%%%%
\section{Proof of the main result}\label{sec:proof_main}

\subsection{The regularized problem}

In the proof of the main result we use the method of mollification in order to get
the approximate solution of the problem \eqref{eq1b}--\eqref{eq10b}.
Let us introduce the mollified system
\begin{align}
w_t - \left(
\delta_w(\theta^{\varepsilon},w^{\varepsilon})w_x\right)_x - \left(
\delta_{\theta}(\theta^{\varepsilon},w)\theta^{\varepsilon}_x\right)_x
 & = d_t
&& {\rm in}\; Q_{T},
\label{eq1d}
\\
( C_w w \theta +  \rho_S C_S \theta)_t + h_d d_t &=
(\lambda(\theta^{\varepsilon},w^{\varepsilon})\theta_x)_x
\nonumber
\\
%& \hspace{-2.5cm}
& \quad +
C_w\left(\theta(\delta_w(\theta^{\varepsilon},w^{\varepsilon}) w_x +
\delta_{\theta}(\theta^{\varepsilon},w)\theta^{\varepsilon}_x )
\right)_x &&{\rm in} \; Q_{T}, \label{eq2d}
\\
d_t &= - \frac{1}{\tau}(d-d_{eq}(\theta))  &&{\rm in} \; Q_{T}.
\label{eq3d}
%\\
%\delta_w(\theta^{\varepsilon},w^{\varepsilon})w_x +
%\delta_{\theta}(\theta^{\varepsilon},w)\theta^{\varepsilon}_x
%|_{x=0} &= 0 && {\rm in} \; I,
%\\
%\lambda(\theta^{\varepsilon},w^{\varepsilon})\theta_x |_{x=0}  &= 0
%&& {\rm in} \;  I,
%\\
%-\delta_w(\theta^{\varepsilon},w^{\varepsilon})w_x -
%\delta_{\theta}(\theta^{\varepsilon},w)\theta^{\varepsilon}_x
%|_{x=\ell} &= \gamma_c(w-w^{\varepsilon}_{\infty}(t)) && {\rm in} \; I,
%\\
%-\lambda(\theta^{\varepsilon},w^{\varepsilon})\theta_x |_{x=\ell} &=
%(\alpha_c+e\sigma|\theta^{\varepsilon}|^3)\theta
% - \omega^{\varepsilon}(t) \!\!\!&& {\rm in} \;  I,
%\\
%d(x,0)&=0 &&{\rm in} \;  \Omega,
%\\
%\theta(x,0)&=\theta_0^{\varepsilon}(x) &&{\rm in} \;  \Omega,
%\\
%w(x,0)&=w_0^{\varepsilon}(x)  &&{\rm in} \; \Omega. \label{eq10d}
\end{align}
Here $\theta^{\varepsilon}:=J_{\varepsilon}\star \mathcal{E}(\theta)$
is a regularization of $\theta$, $\varepsilon$ is a small positive real number, $J_{\varepsilon}$ denotes the standard mollifier defined for the function $\theta$ (see \cite[Paragraph 2.28 and Theorem 2.29]{AdamsFournier1992}) and $\mathcal{E}(\theta)$ is the extension operator extending $\theta$ to be zero on $\mathbb{R}^2\setminus Q_T$ (see \cite[Paragraph 5.17]{AdamsFournier1992}).
Further we set $w^{\varepsilon}:=J_{\varepsilon}\star w$ and in a similar way we introduce $\theta_0^{\varepsilon}$, $w_0^{\varepsilon}$ and $\vartheta^{\varepsilon}$, respectively, as regularizations of $\theta_0$, $w_0$ and $\vartheta$, respectively.

Substituting \eqref{eq1d} into \eqref{eq2d}, equations
\eqref{eq1d}--\eqref{eq3d} can be rewritten as
\begin{align}
w_t - \left(
\delta_w(\theta^{\varepsilon},w^{\varepsilon})w_x\right)_x - \left(
\delta_{\theta}(\theta^{\varepsilon},w)\theta^{\varepsilon}_x\right)_x
 & =  d_t
&& {\rm in}\; Q_{T},\label{eq30}
\\
(C_w w + \rho_S C_S) \theta_t  + (C_w \theta + h_d)d_t &=
(\lambda(\theta^{\varepsilon},w^{\varepsilon})\theta_x)_x \nonumber
\\
% & \hspace{-2.5cm}
& \quad + C_w
\left(\delta_w(\theta^{\varepsilon},w^{\varepsilon})w_x +
\delta_{\theta}(\theta^{\varepsilon},w)\theta^{\varepsilon}_x
\right)\theta_x &&{\rm in} \; Q_{T},\label{eq31}
\\
d_t &= -\frac{1}{\tau}(d-d_{eq}(\theta))  &&{\rm in} \; Q_{T}.
\end{align}
The governing equations are supplemented by the following regularized boundary and initial conditions
\begin{align}
[\delta_w(\theta^{\varepsilon},w^{\varepsilon})w_x +
\delta_{\theta}(\theta^{\varepsilon},w)\theta^{\varepsilon}_x
]\Big|_{x=0} &= 0 && {\rm in} \; I,\label{eq39c}
\\
[\lambda(\theta^{\varepsilon},w^{\varepsilon})\theta_x ]\Big|_{x=0}  &= 0
&& {\rm in} \;  I,
\\
[-\delta_w(\theta^{\varepsilon},w^{\varepsilon})w_x -
\delta_{\theta}(\theta^{\varepsilon},w)\theta^{\varepsilon}_x
]\Big|_{x=\ell} &=
\beta_c \left(\mathcal{P}(\theta^{\varepsilon}(\ell,t),w(\ell,t)) -  P_{\infty}\right) && {\rm in}
\; I,
\\
[-\lambda(\theta^{\varepsilon},w^{\varepsilon})\theta_x ]\Big|_{x=\ell} &=
(\alpha_c+e\sigma|\theta^{\varepsilon}(\ell,t)|^3)\theta(\ell,t)
 - \vartheta^{\varepsilon}(t) \!\!\!&& {\rm in} \;  I,
\\
d(x,0)&=0 &&{\rm in} \;  \Omega,
\\
\theta(x,0)&=\theta_0^{\varepsilon}(x) &&{\rm in} \;  \Omega,
\\
w(x,0)&=w_0^{\varepsilon}(x)  &&{\rm in} \; \Omega. \label{eq39}
\end{align}

In this subsection we prove the existence of a strong solution to the regularized problem
\eqref{eq30}--\eqref{eq39}.

\begin{theorem}\label{th_sup_5}
The system \eqref{eq30}--\eqref{eq39} has a strong solution
$[\theta_{\varepsilon},w_{\varepsilon}] \in
\mathbf{V}^{2,1}_2(Q_T)$, such that
\begin{eqnarray}
\|[\theta_{\varepsilon},w_{\varepsilon}]\|_{L^2(I;\mathbf{W}^{1,2})}
&\leq& c ,\label{est_reg_solution_1}
\\
\|[\theta_{\varepsilon},w_{\varepsilon}]\|_{L^{\infty}(I;\mathbf{L}^2)}
&\leq& c ,\label{est_reg_solution_2}
\end{eqnarray}
where the constant $c$ does not depend on $\varepsilon$.
\end{theorem}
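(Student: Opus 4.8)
The plan is a Leray--Schauder fixed point argument in $X:=L^2(I;\mathbf{W}^{1,2})$, in which the a~priori bound needed to run the theorem is exactly the estimate claimed in \eqref{est_reg_solution_1}--\eqref{est_reg_solution_2}. Fix $\varepsilon>0$ and define $\mathcal{T}\colon X\to X$, $[\bar\theta,\bar w]\mapsto[\theta,w]$, in three steps. First, set $\bar d(x,t):=\tfrac1\tau\int_0^t e^{(s-t)/\tau}d_{eq}(\bar\theta(x,s))\,{\rm d}s$, the solution of \eqref{eq3d} with $\theta$ replaced by $\bar\theta$; by $A_3$ one has $0\le\bar d\le d_1$ and $|\bar d_t|\le 2d_1/\tau$ pointwise, and $\bar\theta\mapsto\bar d$ is Lipschitz into $L^\infty(Q_T)$. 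Next form the mollifications $\bar\theta^{\varepsilon},\bar w^{\varepsilon}$; for the fixed $\varepsilon$ they are smooth on $\overline{Q}_T$ with all relevant $(x,t)$-derivatives bounded in terms of $\|[\bar\theta,\bar w]\|_{\mathbf{L}^2(Q_T)}$, while their $L^2(Q_T)$-norms (and that of $\bar\theta^{\varepsilon}_x$, the regularization being taken so as to commute suitably with $\partial_x$) are dominated by the corresponding norms of $[\bar\theta,\bar w]$ \emph{independently of $\varepsilon$}. Finally, solve the \emph{linear}, decoupled parabolic system built from \eqref{eq30}--\eqref{eq31}: first $w$ from the $w$-equation with frozen coefficient $\delta_w(\bar\theta^{\varepsilon},\bar w^{\varepsilon})\in[\delta_0,\delta_1]$, source $(\delta_\theta(\bar\theta^{\varepsilon},\bar w)\bar\theta^{\varepsilon}_x)_x+\bar d_t\in L^2(Q_T)$ (using $\bar w_x\in L^2(Q_T)$ and $A_2$), boundary data \eqref{eq39c} and the Robin datum $\beta_c(\mathcal{P}(\bar\theta^{\varepsilon}(\ell,\cdot),\bar w(\ell,\cdot))-P_\infty)$, and initial value $w_0^{\varepsilon}$; then $\theta$ from the $\theta$-equation with coefficient $C_w\bar w^{+}+\rho_S C_S\ge\rho_S C_S>0$ on $\theta_t$ (the positive part is inserted to keep the equation uniformly parabolic for all inputs; see below), principal coefficient $\lambda(\bar\theta^{\varepsilon},\bar w^{\varepsilon})\in[\lambda_0,\lambda_1]$, the already computed $w$ in the first-order and source terms, the radiative datum $\vartheta^{\varepsilon}$, and initial value $\theta_0^{\varepsilon}$. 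Classical $L^2$ parabolic regularity then yields a unique $[\theta,w]\in\mathbf{V}^{2,1}_2(Q_T)$ (with $\varepsilon$-dependent bounds, which is harmless here). The map $\mathcal{T}$ is continuous, and compact because it sends bounded subsets of $X$ into bounded subsets of $\mathbf{V}^{2,1}_2(Q_T)$, which embeds compactly into $X$ by \eqref{comb_emb01}.

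The heart of the matter is the a~priori bound on $[\theta,w]$ with $[\theta,w]=\mu\,\mathcal{T}[\theta,w]$, $\mu\in[0,1]$: such a pair solves the above system with $[\bar\theta,\bar w]=[\theta,w]$ and right-hand sides carrying a factor $\mu\in[0,1]$ that does not affect any of the estimates. I would test the $\theta$-equation with $\theta$ and the $w$-equation with $\nu\,w$ for a small $\nu>0$, add, and integrate over $(0,t)$. In $\int_\Omega(C_w w+\rho_S C_S)\theta_t\,\theta$ the term $-\tfrac{C_w}2\int_\Omega w_t\theta^2$ appears; replacing $w_t$ by $(\delta_w(\theta^{\varepsilon},w^{\varepsilon})w_x+\delta_\theta(\theta^{\varepsilon},w)\theta^{\varepsilon}_x)_x+d_t$ from the $w$-equation and integrating by parts produces exactly $+C_w\int_\Omega(\delta_w w_x+\delta_\theta\theta^{\varepsilon}_x)\theta\theta_x$ together with the boundary term $\tfrac{C_w}2\beta_c(\mathcal{P}(\theta^{\varepsilon}(\ell,t),w(\ell,t))-P_\infty)\theta(\ell,t)^2$; the first of these \emph{cancels} the quasilinear first-order term $C_w\int_\Omega(\delta_w w_x+\delta_\theta\theta^{\varepsilon}_x)\theta_x\,\theta$ coming from the right-hand side of the $\theta$-equation. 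What remains on the dissipative side is $\lambda_0\|\theta_x\|^2_{L^2(Q_t)}+\nu\delta_0\|w_x\|^2_{L^2(Q_t)}$, the quadratic form $\tfrac{\rho_S C_S}2\|\theta(\cdot,t)\|^2_{L^2(\Omega)}+\tfrac\nu2\|w(\cdot,t)\|^2_{L^2(\Omega)}$ (the contribution $\tfrac{C_w}2\int_\Omega w\theta^2$ being nonnegative once $w\ge0$, or automatically if the truncation is kept) and the nonnegative boundary terms $\alpha_c\theta(\ell,t)^2$, $e\sigma|\theta^{\varepsilon}(\ell,t)|^3\theta(\ell,t)^2$, $\tfrac{C_w}2\beta_c\mathcal{P}\,\theta(\ell,t)^2$, $\nu\beta_c\mathcal{P}\,w(\ell,t)$, the last ones nonnegative by $A_4$ (for the $\theta(\ell,t)^2$ one using $w\ge0$). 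The single cross term left, $-\nu\int_\Omega\delta_\theta\theta^{\varepsilon}_x w_x$, is absorbed for $\nu$ small partly into $\lambda_0\|\theta_x\|^2$ (via $\|\theta^{\varepsilon}_x\|_{L^2(Q_T)}\le c\|\theta_x\|_{L^2(Q_T)}$) and partly into $\nu\delta_0\|w_x\|^2$. The remaining contributions are lower order: $|d_t|\le 2d_1/\tau$ ($A_3$) gives $|\int d_t\theta^2|\le c\|\theta\|^2_{L^2(\Omega)}$ and similar, while $\vartheta^{\varepsilon}\theta(\ell,t)$ and $\beta_c P_\infty\theta(\ell,t)^2$ are controlled by the one-dimensional trace--interpolation inequality $\|\varphi(\ell,\cdot)\|^2_{L^2(\Omega)}\le\eta\|\varphi_x\|^2_{L^2(\Omega)}+C_\eta\|\varphi\|^2_{L^2(\Omega)}$ with $\eta$ small ($\vartheta^{\varepsilon}$ bounded uniformly by $\vartheta$, $A_5$). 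Gronwall's lemma then gives $\|[\theta,w]\|_{L^\infty(I;\mathbf{L}^2)}+\|[\theta,w]\|_{L^2(I;\mathbf{W}^{1,2})}\le c$ with $c$ independent of $\mu$ and of $\varepsilon$; Leray--Schauder produces a fixed point $[\theta_\varepsilon,w_\varepsilon]\in\mathbf{V}^{2,1}_2(Q_T)$, a strong solution of \eqref{eq30}--\eqref{eq39}, and the case $\mu=1$ is exactly \eqref{est_reg_solution_1}--\eqref{est_reg_solution_2}.

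Two points carry the real difficulty. First, nonnegativity of $w$ (and of $\theta$): it is needed for the coercivity of $\tfrac{C_w}2\int w\theta^2$ and for the sign of the moisture-type boundary term $\tfrac{C_w}2\beta_c\mathcal{P}\,\theta(\ell,t)^2$, and it makes the truncation $C_w\bar w^{+}+\rho_S C_S$ inactive at the fixed point, so that $[\theta_\varepsilon,w_\varepsilon]$ genuinely solves \eqref{eq30}--\eqref{eq39} and not a truncated variant; I expect to obtain $w\ge0$, $\theta\ge0$ by testing the equations with $w^{-}$ and $\theta^{-}$ and invoking the positivity of the initial data ($A_6$), the sign condition on $\mathcal{P}$ ($A_4$) and the no-flux/Robin structure of \eqref{eq39c}--\eqref{eq39}. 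Second, every estimate must be read off uniformly in $\varepsilon$: one may never invoke an $L^\infty$-in-$(x,t)$ bound on $\theta^{\varepsilon}$ or $\theta^{\varepsilon}_x$ (which would cost a negative power of $\varepsilon$), only that mollification does not increase the relevant $L^2(Q_T)$ norms and that the superlinear boundary term $e\sigma|\theta^{\varepsilon}(\ell,\cdot)|^3$ occurs with the favourable sign when tested against $\theta$. Once these are settled, the algebraic cancellation above together with the standard linear-parabolic and Gronwall machinery close the argument.
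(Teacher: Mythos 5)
Your proposal is correct and follows essentially the same route as the paper: a Leray--Schauder iteration in $L^2(I;\mathbf{W}^{1,2})$ built on the decoupled frozen-coefficient auxiliary problem (with $d$ integrated explicitly, $w\ge 0$ obtained by testing with $w^-$, and the signs from $A_4$ and the radiative term exploited exactly as you describe), and an energy estimate whose heart is precisely your cancellation of $C_w\int(\delta_w w_x+\delta_\theta\theta^{\varepsilon}_x)\theta\theta_x$ --- the paper realizes it by testing the $w$-equation with $C_w\theta^2$ and the $\theta$-equation with $2\theta$, which is algebraically identical to your substitution of $w_t$ into $-\tfrac{C_w}{2}\int w_t\theta^2$. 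The only cosmetic differences are that the paper scales the data by a parameter $\zeta$ rather than the operator by $\mu$, invokes the quasilinear Ladyzhenskaya--Solonnikov--Uraltseva theorem for the $w$-step instead of your full freezing plus truncation, and closes the $\theta$-estimate first and then the $w$-estimate using it, rather than your simultaneous $\nu$-weighted combination.
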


\begin{proof} Proof of Theorem \ref{th_sup_5} follows from the a priori
estimates in the ``weak'' classes of functions by the usual
Leray-Schauder fixed point arguments.

\paragraph{Solution to an auxiliary problem.}
For any given couple $[\tilde{\theta},\tilde{w}]\in
L^2(I;\mathbf{W}^{1,2})$ and $0 \leq \zeta \leq 1$ consider the
initial-boundary value problem

\begin{align}
w_t - \left(
\delta_w(\tilde{\theta}^{\varepsilon},\tilde{w}^{\varepsilon})w_x\right)_x
- \left(
\delta_{\theta}(\tilde{\theta}^{\varepsilon},w)\tilde{\theta}^{\varepsilon}_x\right)_x
 & = d_t
&& {\rm in}\; Q_{T},
\label{eq40}
\\
(C_w w + \rho_S C_S) \theta_t  + (C_w \theta + h_d)d_t &=
(\lambda(\tilde{\theta}^{\varepsilon},\tilde{w}^{\varepsilon})\theta_x)_x
\nonumber
\\
% & \hspace{-3cm}
& \quad + C_w
\left(\delta_w(\tilde{\theta}^{\varepsilon},\tilde{w}^{\varepsilon})w_x
+
\delta_{\theta}(\tilde{\theta}^{\varepsilon},w)\tilde{\theta}^{\varepsilon}_x
\right)\theta_x &&{\rm in} \; Q_{T},
\label{eq41}
\\
d_t &= -\frac{1}{\tau}(d- \zeta d_{eq}(\tilde{\theta}))  &&{\rm in}
\; Q_{T},
\label{eq42}
\\
[\delta_w(\tilde{\theta}^{\varepsilon},\tilde{w}^{\varepsilon})w_x +
\delta_{\theta}(\tilde{\theta}^{\varepsilon},w)\tilde{\theta}^{\varepsilon}_x]
\Big|_{x=0} &= 0 && {\rm in} \; I,
\label{eq43}
\\
[\lambda(\tilde{\theta}^{\varepsilon},\tilde{w}^{\varepsilon})\theta_x]
\Big|_{x=0}  &= 0 && {\rm in} \;  I,
\label{eq44}
\\
[-\delta_w(\tilde{\theta}^{\varepsilon},\tilde{w}^{\varepsilon})w_x -
\delta_{\theta}(\tilde{\theta}^{\varepsilon},w)\tilde{\theta}^{\varepsilon}_x]
\Big|_{x=\ell} &
=
\beta_c(\mathcal{P}(\tilde{\theta}^{\varepsilon}(\ell,t),w(\ell,t)) - \zeta P_{\infty})
&& {\rm in} \; I,
\label{eq45}
\\
[-\lambda(\tilde{\theta}^{\varepsilon},\tilde{w}^{\varepsilon})\theta_x]
\Big|_{x=\ell}
&  = \! (\alpha_c \!+\! e\sigma|\tilde{\theta}^{\varepsilon}(\ell,t)|^3)\theta(\ell,t)
 \!-\! \zeta\vartheta^{\varepsilon}(t)\!\!\!\! && {\rm in} \;
I,
\label{eq46}
\\
d(x,0)&= 0 &&{\rm in} \;  \Omega,
\label{eq47}
\\
\theta(x,0)&=\zeta\theta_0^{\varepsilon}(x) &&{\rm in} \;
\Omega,
\label{eq48}
\\
w(x,0)&=\zeta w_0^{\varepsilon}(x)  &&{\rm in} \; \Omega.
\label{eq49}
\end{align}
The proof of the existence of the solution to the system
\eqref{eq40}--\eqref{eq49} is split into three steps:
\paragraph{Step 1}
First we treat the initial problem \eqref{eq42} and \eqref{eq47}.
Let $d$ be a solution of the ordinary differential equation
\begin{displaymath}
d_t +\frac{1}{\tau}d  =  \frac{1}{\tau}\zeta d_{eq}(\tilde{\theta})
\end{displaymath}
(which holds for almost every $t \in (0,T)$ and $x \in \Omega$) with the initial
condition
\begin{displaymath}
d(x,0) = 0 \; \textmd{ in }\Omega.
\end{displaymath}
Then
\begin{equation}\label{solution_d}
d(x,t) = \frac{1}{\tau}\int_0^t e^{(s-t)/{\tau}}\zeta d_{eq}(\tilde{\theta}(x,s)){\rm d}s
\end{equation}
for every $t \in (0,T)$ and almost every $x \in \Omega$.
Hence, for almost every $x\in\Omega$ we have $d(x,\cdot)\in W^{1,\infty}(I)$.
Moreover, $d$ and $d_t$ are nonnegative almost everywhere in $Q_T$
($d(x,\cdot)$ is nondecreasing with respect to $t$).

\paragraph{Step 2}
The system of equations \eqref{eq40},
\eqref{eq43},  \eqref{eq45} and \eqref{eq49} (written in non-divergence form)
is a special case of the problem (cf. \cite[Chapter V., (7.1)--(7.2)]{LadSolUr})
\begin{align*}
w_t-a(x,t)w_{xx} - b(x,t,w,w_x) & = 0 &&
{\rm in } \; Q_T, \label{}
\\
 a(0,t)w_x(0,t) + g_1(0,t,w)   &= 0
&& {\rm in} \;  I,
\\
-a(\ell,t)w_x(\ell,t) + g_2(\ell,t,w) & = 0 && {\rm in} \;  I,
\\
w(x,0)&=\zeta w_0^{\varepsilon}(x)  &&{\rm in} \; \Omega. \label{}
\end{align*}
In our case we have
\begin{eqnarray*}
a(x,t) &=&    \delta_w(\tilde{\theta}^{\varepsilon},\tilde{w}^{\varepsilon}),
\\
b(x,t,w,w_x)
 & = &
\left(
\delta_w(\tilde{\theta}^{\varepsilon},\tilde{w}^{\varepsilon})\right)_x w_x
+ \left(
\delta_{\theta}(\tilde{\theta}^{\varepsilon},w)\right)_x\tilde{\theta}^{\varepsilon}_x
+
\delta_{\theta}(\tilde{\theta}^{\varepsilon},w)\tilde{\theta}^{\varepsilon}_{xx}
+ d_t,
\\
 g_1(0,t,w)  & = &
 \delta_{\theta}(\tilde{\theta}^{\varepsilon}(0,t),w(0,t))\tilde{\theta}^{\varepsilon}_x (0,t),
\\
 g_2(\ell,t,w)  & = &
-
\delta_{\theta}(\tilde{\theta}^{\varepsilon}(\ell,t),w(\ell,t))\tilde{\theta}^{\varepsilon}_x(\ell,t)
-
\beta_c\mathcal{P}(\tilde{\theta}^{\varepsilon}(\ell,t),w(\ell,t))
+
\beta_c \zeta P_{\infty}.
\end{eqnarray*}
Under the assumptions $A_1$--$A_4$ and $A_6$,
following the classical parabolic--equation theory for quasi-linear equations
\cite{LadSolUr}, for any given $[\tilde{\theta},\tilde{w}]\in
L^2(I;\mathbf{W}^{1,2})$ and $d$ given by \eqref{solution_d} the problem \eqref{eq40},
\eqref{eq43},  \eqref{eq45} and  \eqref{eq49} admits the unique
 solution $w \in
{V}^{2,1}_2(Q_T)$ (see \cite[Chapter V., Theorem 7.4]{LadSolUr}).
Note that ${V}^{2,1}_2(Q_T) \hookrightarrow C(\overline{Q}_T)$. Now
we prove that $w$ is non-negative in $Q_T$. Let $w^+ =
\max\left\{0,w\right\}$ and $w^- = \max\left\{0,-w\right\}$. Test
\eqref{eq40} by $w^-$ to get
\begin{align}\label{eq50}
&\frac{1}{2}\int_{\Omega} |w^-(t)|^2 {\rm d}{x} + \int_{Q_t}
\delta_w(\tilde{\theta}^{\varepsilon},\tilde{w}^{\varepsilon})|w^-_x|^2
{\rm d}x{\rm d}s
+ \int_{Q_t} d_t w^- {\rm d}x{\rm d}s
\nonumber\\
& \qquad + \int_0^t \beta_c\zeta P_{\infty}w^-(\ell,s) {\rm d}s
- \int_0^t \beta_c\mathcal{P}(\tilde{\theta}^{\varepsilon}(\ell,s),w(\ell,s))w^-(\ell,s) {\rm d}s
\nonumber\\
& =  -\int_{Q_t}
\delta_{\theta}(\tilde{\theta}^{\varepsilon},w)\tilde{\theta}^{\varepsilon}_x
w^-_x {\rm d}x{\rm d}s.
\end{align}
Since $w=w^+ - w^-$  we can write
\begin{eqnarray*}
\int_0^t \mathcal{P}(\tilde{\theta}^{\varepsilon}(\ell,s),w(\ell,s))w^-(\ell,s) {\rm d}s
&=&
\int_0^t \mathcal{P}(\tilde{\theta}^{\varepsilon}(\ell,s),-w^-(\ell,s))w^-(\ell,s) {\rm d}s
\\
&=&
-\int_0^t \mathcal{P}(\tilde{\theta}^{\varepsilon}(\ell,s),-w^-(\ell,s))(-w^-(\ell,s)) {\rm d}s
\end{eqnarray*}
and due to $A_4$ the last integral on the left--hand side of the
equation \eqref{eq50} is always non-positive. Hence, applying
\eqref{con11c} and the Young's inequality to the term on the right
hand side and neglecting the non-negative terms on the left-hand
side, we arrive at the inequality
\begin{eqnarray*}
\frac{1}{2}\int_{\Omega} |w^-(t)|^2 {\rm d}{x}  + \int_{Q_t}
\delta_w(\tilde{\theta}^{\varepsilon},\tilde{w}^{\varepsilon})|w^-_x|^2
{\rm d}x{\rm d}s
&\leq&
\int_{Q_t}
c |w^-| |\tilde{\theta}^{\varepsilon}_x|
|w^-_x| {\rm d}x{\rm d}s
\\
&\leq&
c(\xi) \|\tilde{\theta}^{\varepsilon}_x\|_{L^{\infty}(Q_t)} \int_{Q_t}
|w^-|^2 {\rm d}x{\rm d}s
\\
&& \qquad + \xi
\|\tilde{\theta}^{\varepsilon}_x\|_{L^{\infty}(Q_t)} \int_{Q_t}
|w^-_x|^2 {\rm d}x{\rm d}s.
\end{eqnarray*}
Hence, choosing $\xi$ sufficiently small and taking into account \eqref{con11a}, we can use the Gronwall lemma to
conclude $w^- \equiv 0$ and thus $w \geq 0$ in $Q_T$.

\paragraph{Step 3}
Now having  $w \in {V}^{2,1}_2(Q_T)$, $w \geq 0$,
we get $\theta$ as the solution of the system
\eqref{eq41}, \eqref{eq44}, \eqref{eq46} and \eqref{eq48}.
We can write this linear parabolic problem in the form
\begin{align}
\theta_t-a_1(x,t)\theta_{xx} - a_2(x,t)\theta_x + a_3(x,t)\theta &= f(x,t)  &&\!
{\rm in } \; Q_T, \label{eqlin01}
\\
\theta_x(0,t)   &= 0
&& {\rm in} \;  I,
\\
-\lambda(\tilde{\theta}^{\varepsilon}(\ell,t),\tilde{w}^{\varepsilon}(\ell,t))\theta_x(\ell,t)
-(\alpha_c \!+\! e\sigma|\tilde{\theta}^{\varepsilon}(\ell,t)|^3)\theta(\ell,t)
& = \zeta\vartheta^{\varepsilon}(t)   && {\rm in} \;  I,
\\
\theta(x,0)&=\zeta\theta_0^{\varepsilon}(x)  &&{\rm in} \; \Omega,
\label{eqlin02}
\end{align}
where
\begin{eqnarray*}
a_1(x,t) &=& \frac{\lambda(\tilde{\theta}^{\varepsilon},\tilde{w}^{\varepsilon})}{(C_w w + \rho_S C_S)} ,
\\
a_2(x,t) &=& \frac{1}{(C_w w + \rho_S C_S)}
\left[ \lambda_x(\tilde{\theta}^{\varepsilon},\tilde{w}^{\varepsilon})
+
C_w
\left(\delta_w(\tilde{\theta}^{\varepsilon},\tilde{w}^{\varepsilon})w_x
+
\delta_{\theta}(\tilde{\theta}^{\varepsilon},w)\tilde{\theta}^{\varepsilon}_x
\right)
\right],
\\
a_3(x,t) &=&  \frac{C_w  d_t}{(C_w w + \rho_S C_S)},
\\
f(x,t) &=&  \frac{ - h_d d_t }{(C_w w + \rho_S C_S)}
\end{eqnarray*}
and $a_1 \in C(\overline{Q}_T)$, $a_2 \in L^{2+\epsilon}(I;L^{\infty}(\Omega))$ ($\epsilon$ is a small positive number), $a_3$ and $f \in L^{\infty}(\Omega)$.
By the linear theory for parabolic problems (see \cite[Theorem 2.1]{Denk}) there exists the uniquely determined
solution $\theta \in {V}^{2,1}_2(Q_T)$ of the problem
\eqref{eqlin01}--\eqref{eqlin02}.

\smallskip

Finally, let us conclude that for any given couple $[\tilde{\theta},\tilde{w}]\in
L^2(I;\mathbf{W}^{1,2})$ and $0 \leq \zeta \leq 1$ we have
$[\theta,w] \in \mathbf{V}^{2,1}_2(Q_T)$ as the solution of
the problem \eqref{eq40}--\eqref{eq49}.

\paragraph{Basic a priori estimates.}

In this paragraph we prove some a priori estimates for $\theta$ and $w$.

Test \eqref{eq30} by $C_w \theta^2$ and \eqref{eq31} by $2\theta$ to
obtain (adding these both resulting equations)
\begin{align}\label{eq65}
&\int_{Q_t} \left(\theta^2 (C_w w + \rho_S C_S )\right)_s {\rm
d}{x}{\rm d}s + \int_{Q_t}
2\lambda({\theta}^{\varepsilon},{w}^{\varepsilon})|\theta_x|^2 {\rm
d}x{\rm d}s + \int_{Q_t} C_w \theta^2 d_s {\rm d}x{\rm d}s \nonumber
\\
& \qquad + \int_{Q_t} 2 h_d \theta d_s {\rm d}x{\rm d}s
 + \int_0^t C_w \theta(\ell,s)^2
 \beta_c(\mathcal{P}(\theta^{\varepsilon}(\ell,s),w(\ell,s)) - \zeta P_{\infty})
  \;{\rm d}s
\nonumber\\
& \qquad + \int_0^t 2\theta(\ell,s) \left( (\alpha_c +
e\sigma|\theta^{\varepsilon}(\ell,s)|^3)\theta(\ell,s)
 - \zeta\vartheta^{\varepsilon}(s) \right) {\rm d}s
 =  0.
\end{align}
Hence, simple modifications yield
\begin{align}\label{eq65b}
&\int_{\Omega} \theta(x,t)^2 (C_w w(x,t) + \rho_S C_S ) {\rm d}{x} +
\int_{Q_t}
2\lambda({\theta}^{\varepsilon},{w}^{\varepsilon})|\theta_x|^2 {\rm
d}x{\rm d}s + \int_{Q_t} C_w \theta^2 d_s {\rm d}x{\rm d}s
\nonumber\\
& \qquad + \int_0^t C_w \theta(\ell,s)^2
 \beta_c\mathcal{P}(\theta^{\varepsilon}(\ell,s),w(\ell,s)){\rm d}s
+ \int_0^t 2\theta^2(\ell,s) (\alpha_c +
e\sigma|\theta^{\varepsilon}(\ell,s)|^3)
 {\rm d}s
\nonumber\\
& =  \int_{\Omega} \theta(x,0)^2 (C_w w(x,0) + \rho_S C_S ) {\rm d}{x}
 - \int_{Q_t} 2 h_d \theta d_s {\rm d}x{\rm d}s
\nonumber  \\
& \qquad + \zeta \int_0^t C_w \theta(\ell,s)^2 \beta_c
P_{\infty}{\rm d}s + \zeta \int_0^t 2\theta(\ell,s)
\vartheta^{\varepsilon}(s){\rm d}s.
 \end{align}
The last three integrals can be estimated using Young's inequality
and \cite[Remark 4]{FiloKacur1995}, respectively, as follows:
\begin{equation}\label{sup_01}
\int_{Q_t} 2 h_d \theta d_s {\rm d}x{\rm d}s \leq  h_d\int_{Q_t}
\theta^2{\rm d}x{\rm d}s + h_d \int_{Q_t} d_s^2(x,s) {\rm d}x{\rm d}s,
\end{equation}
\begin{equation}\label{sup_02}
\int_0^t C_w \theta(\ell,s)^2 \beta_c  P_{\infty}{\rm d}s \leq C_w
\beta_c  P_{\infty}  \int_{Q_t}  \epsilon|\theta_x|^2 +
C(\epsilon)|\theta|^2 {\rm d}x{\rm d}s
\end{equation}
and
\begin{equation}\label{sup_03}
\int_0^t 2\theta(\ell,s) \vartheta^{\varepsilon}(s){\rm d}s \leq
\epsilon \int_0^t \theta^2(\ell,s){\rm d}s + C(\epsilon)\int_0^t
\vartheta^{\varepsilon}(s)^2{\rm d}s,
\end{equation}
where $\epsilon$ represents sufficiently small positive real number.
Note that since $w \geq 0$, $A_4$ yields
\begin{displaymath}
\int_0^t C_w \theta(\ell,s)^2
 \beta_c\mathcal{P}(\theta^{\varepsilon}(\ell,s),w(\ell,s)){\rm d}s  \geq 0.
\end{displaymath}
Taking into account \eqref{con11e}, \eqref{sup_01}--\eqref{sup_03} and neglecting
the non-negative terms on the left-hand side in \eqref{eq65b} we
arrive at the estimate
\begin{equation}\label{eq65c}
c_1 \int_{\Omega} \theta(x,t)^2 {\rm d}{x}
+
c_2 \int_{Q_t}|\theta_x|^2 {\rm d}x{\rm d}s
+
c_3 \int_0^t \theta^2(\ell,s) {\rm d}s
\leq c_4 + c_5 \int_{Q_t} \theta^2{\rm d}x{\rm d}s.
\end{equation}
Hence, by means of Gronwall's lemma one checks that
\begin{eqnarray}
\| \theta \|_{L^2(I;W^{1,2}(\Omega))} &\leq& c,\label{est1_theta_a}
\\
\| \theta \|_{L^{\infty}(I;L^2(\Omega))} &\leq& c \label{est1_theta_b}
\end{eqnarray}
for the positive constant $c$ being independent of $\zeta$ and $\varepsilon$.

Now let us derive some uniform estimates for water content $w$.
Multiplying the equation \eqref{eq30} by $w$ and integrating over
$Q_t$ we get
\begin{align}\label{eq61}
&\frac{1}{2}\int_{\Omega}w(x,t)^2{\rm d}{x} + \int_{Q_t}
\delta_w({\theta}^{\varepsilon},{w}^{\varepsilon})|w_x|^2 {\rm
d}x{\rm d}s + \int_{Q_t}
\delta_{\theta}({\theta}^{\varepsilon},w){\theta}^{\varepsilon}_x
w_x {\rm d}x{\rm d}s
\nonumber \\
& \qquad
+  \int_0^t w(\ell,s)  \beta_c(\mathcal{P}(\theta^{\varepsilon}(\ell,s),w(\ell,s))-\zeta P_{\infty})\;{\rm d}s
\nonumber \\
& = \frac{1}{2}\int_{\Omega}w(x,0)^2{\rm d}{x} + \int_{Q_t} w\, d_s \,
{\rm d}x{\rm d}s
\end{align}
and consequently
\begin{align}\label{eq62}
& \frac{1}{2}\int_{\Omega}w(x,t)^2{\rm d}{x}
+\int_{Q_t}\delta_w({\theta}^{\varepsilon},{w}^{\varepsilon})|w_x|^2 {\rm
d}x{\rm d}s
+
\int_0^t w(\ell,s)  \beta_c\mathcal{P}(\theta^{\varepsilon}(\ell,s),w(\ell,s)){\rm d}s
\nonumber\\
& \leq
\frac{1}{2}\int_{\Omega}w(x,0)^2{\rm d}{x}
+
\frac{1}{2}\int_{Q_t} w^2 {\rm d}x{\rm d}s + \frac{1}{2}\int_{Q_t} d_s^2(x,s) {\rm d}x{\rm d}s
\nonumber\\
& \qquad -
\int_{Q_t}
\delta_{\theta}({\theta}^{\varepsilon},w){\theta}^{\varepsilon}_x
w_x {\rm d}x{\rm d}s
+
\zeta  \int_0^t w(\ell,s)  \beta_cP_{\infty}{\rm d}s.
\end{align}

By $A_4$ we deduce
\begin{equation}\label{sup_20a}
\int_0^t w(\ell,s)  \beta_c\mathcal{P}(\theta^{\varepsilon}(\ell,s),w(\ell,s)){\rm d}s \geq 0.
\end{equation}

Further, by \eqref{con11bb} and the Young's inequality we have
\begin{eqnarray}\label{sup_20b}
\int_{Q_t}
\delta_{\theta}({\theta}^{\varepsilon},w){\theta}^{\varepsilon}_x
w_x {\rm d}x{\rm d}s &\leq& \epsilon \int_{Q_t} w_x^2  {\rm d}x{\rm
d}s + C(\epsilon) \int_{Q_t}
|\delta_{\theta}({\theta}^{\varepsilon},w)|^2
|{\theta}^{\varepsilon}_x|^2 {\rm d}x{\rm d}s
\nonumber\\
&\leq&
 \epsilon \int_{Q_t} w_x^2  {\rm d}x{\rm d}s
+ c_1 C(\epsilon) \| \theta \|^2_{L^2(I;W^{1,2}(\Omega))},
\end{eqnarray}
while using the Cauchy's inequality and \cite[Remark
4]{FiloKacur1995} one obtains the estimate
\begin{eqnarray}\label{sup_20c}
 \int_0^t w(\ell,s)  \beta_cP_{\infty}{\rm d}s
&\leq&
c_1 + c_2 \int_0^t w^2(\ell,s) {\rm d}s
\nonumber \\
&\leq& c_1  +  c_2 \int_{Q_t}  \epsilon w_x^2 + C(\epsilon)w^2 {\rm
d}x{\rm d}s.
\end{eqnarray}
Now taking \eqref{eq62}--\eqref{sup_20c} together and using \eqref{con11a} and \eqref{est1_theta_a},
we arrive at the estimate of the form
\begin{equation}\label{eq66}
c_1 \int_{\Omega} w(x,t)^2 {\rm d}{x}
+
c_2 \int_{Q_t} w_x^2 {\rm d}x{\rm d}s
+
c_3 \int_{Q_t} w^2{\rm d}x{\rm d}s
\leq
c_4 + c_5 \int_{Q_t} w^2{\rm d}x{\rm d}s.
\end{equation}
By the Gronwall's inequality we get the uniform estimates for $w$
\begin{eqnarray}
\| w \|_{L^2(I;W^{1,2}(\Omega))} &\leq& c,\label{est1_w_a}
\\
\| w \|_{L^{\infty}(I;L^2(\Omega))} &\leq& c \label{est1_w_b}
\end{eqnarray}
for the positive constant $c$ being independent of $\zeta$ and $\varepsilon$.

Having established the a priori estimates for the solution
of the regularized problem we are ready to complete the proof
of Theorem \ref{th_sup_5} using the Leray--Schauder approach.

%%%

\paragraph{Leray-Schauder fixed point arguments.}

Denote by $X=L^2(I;\mathbf{W}^{1,2})$, take arbitrary
$[\tilde{\theta},\tilde{w}]\in X$, $\zeta\in[0,1]$ and define the
couple $[\theta,w] \in \mathbf{V}^{2,1}_2(Q_T)$ as the solution of
the problem \eqref{eq40}--\eqref{eq49}. Define the nonlinear mapping
$\mathcal{A}:X\times [0,1]\rightarrow X$, given by the equation
$[\theta,w]=\mathcal{A}([\tilde{\theta},\tilde{w},\zeta])$. It is a
technical procedure to check that the mapping $\mathcal{A}$ is
continuous and compact (cf. \eqref{comb_emb01}). For $\zeta=0$ we
have $\mathcal{A}([\tilde{\theta},\tilde{w},0])=[0,0]$ for all
$[\tilde{\theta},\tilde{w}]\in X$. The estimates
\eqref{est1_theta_a} and \eqref{est1_w_a} imply that $[\theta,w]$,
the solution of the problem $[\theta,w]
=\mathcal{A}([\theta,w,\zeta])$ for some $\zeta \in [0,1]$, is
uniformly bounded in $X$. Now the existence of at least one fixed
point $[\theta_{\varepsilon},w_{\varepsilon}] \in X$,
$\mathcal{A}[\theta_{\varepsilon},w_{\varepsilon},1]=
[\theta_{\varepsilon},w_{\varepsilon}]$, follows from the
Leray-Schauder theorem \cite{LeraySchauder1934}. Consequently,
$[\theta_{\varepsilon},w_{\varepsilon}]
=\mathcal{A}([\theta_{\varepsilon},w_{\varepsilon},1])\in
\mathbf{V}^{2,1}_2(Q_T)$ and
$[\theta_{\varepsilon},w_{\varepsilon}]$ is the strong solution of
the system \eqref{eq30}--\eqref{eq39}. The estimates
\eqref{est1_theta_a} and \eqref{est1_w_a} imply
\eqref{est_reg_solution_1}. Clearly, \eqref{est1_theta_b} and
\eqref{est1_w_b} yield \eqref{est_reg_solution_2}. The proof of
Theorem \ref{th_sup_5} is complete.
\end{proof}

%%%%%%%%%%%%%%%%%%%%%%%%%%%%%%%%%%%%%%%%%%%%%%%%%%%%%%%%%%%%%%%%%%%%
\subsection{Passage to the limit for $\varepsilon \rightarrow 0$}
\label{}

To complete the proof of the main result stated in
Theorem \ref{main_result} we pass to the limit for $\varepsilon \rightarrow 0$
and study the convergence of the solution $[\theta_{\varepsilon},w_{\varepsilon}]$ of
the system \eqref{eq30}--\eqref{eq39}.
We present various convergence results based on the uniform estimates
 \eqref{est1_theta_a}--\eqref{est1_theta_b} and \eqref{est1_w_a}--\eqref{est1_w_b}
and, in addition,
prove some a priori estimates for the strong solution
$[\theta_{\varepsilon},w_{\varepsilon}]$ of the problem
\eqref{eq30}--\eqref{eq39}, which is equivalent to the system
\eqref{eq1d}--\eqref{eq3d} with the boundary and initial conditions \eqref{eq39c}--\eqref{eq39}.

First, recall that by Theorem \ref{th_sup_5} we have
\begin{equation}\label{est_30a}
\| \theta_{\varepsilon} \|_{L^2(I;W^{1,2}(\Omega))}, \; \|
\theta_{\varepsilon} \|_{L^{\infty}(I;L^2(\Omega))} \leq c,
\end{equation}
\begin{equation}\label{est_30b}
\| w_{\varepsilon} \|_{L^2(I;W^{1,2}(\Omega))}, \; \|
w_{\varepsilon} \|_{L^{\infty}(I;L^2(\Omega))} \leq c
\end{equation}
and from the equation \eqref{eq40} we derive that
\begin{equation}
\|(w_{\varepsilon})_t \|_{L^2(I;W^{1,2}(\Omega)^*)}  \leq c.
\end{equation}
As a consequence of the preceding a priori estimates we see that
there exist functions
 $w \in L^2(I;W^{1,2}(\Omega))$ and $\theta \in L^2(I;W^{1,2}(\Omega))$, $w_t \in
L^2(I;W^{1,2}(\Omega)^*)$, such that, along a selected
subsequence, we have ($\varepsilon_j\rightarrow 0^+$
 as $j\rightarrow \infty$)
\begin{eqnarray}
w_{\varepsilon_j} & \rightarrow & w  \qquad \textrm{ weakly in }
L^2(I;W^{1,2}(\Omega)),\label{conv_20}
\\
(w_{\varepsilon_j})_t & \rightarrow & w_t  \qquad \textrm{weakly in
} L^2(I;W^{1,2}(\Omega)^*),\label{conv_21}
\\
w_{\varepsilon_j} & \rightarrow & w  \qquad  \textrm{ almost
everywhere in }Q_T
%w_{\varepsilon_j} & \rightarrow w && \textrm{ strongly in }
%L^2(0,T;C(\overline{\Omega})),
%\\
%w_{\varepsilon_j} & \rightarrow w && \textrm{ strongly in }
%L^p(Q_T),\; 1\leq p < 6
\end{eqnarray}
and
\begin{equation}
\theta_{\varepsilon_j}  \rightarrow \theta \qquad \textrm{weakly in
} L^2(I;W^{1,2}(\Omega)).\label{conv_22}
\end{equation}
By the embedding $L^2(I;W^{1,2}(\Omega)) \cap
L^{\infty}(I;L^2(\Omega)) \hookrightarrow L^6(Q_T)$ we get the uniform bound
(using \eqref{est_30a} and \eqref{est_30b})
\begin{multline}
\|\theta_{\varepsilon}(\delta_w(\theta^{\varepsilon}_{\varepsilon},w^{\varepsilon}_{\varepsilon}) (w_{\varepsilon})_x
+
\delta_{\theta}(\theta^{\varepsilon}_{\varepsilon},w_{\varepsilon})(\theta^{\varepsilon}_{\varepsilon})_x )\|_{L^{3/2}(Q_T)}
\\
\leq
c_1 \|\theta_{\varepsilon}\|_{L^6(Q_T)}
\left(\|(w_{\varepsilon})_x\|_{L^2(Q_T)}
+ \|(\theta_{\varepsilon})_x\|_{L^2(Q_T)}\right)  \leq c_2
\end{multline}
and similarly
\begin{equation}
\|C_w w_{\varepsilon} \theta_{\varepsilon} +  \rho_S C_S \theta_{\varepsilon} \|_{L^{3/2}(0,T;W^{1,3/2}(\Omega))} \leq c.
\end{equation}
From the equation \eqref{eq2d} we obtain the uniform  estimate
\begin{equation}\label{est_50}
\| ( C_w w_{\varepsilon} \theta_{\varepsilon} +  \rho_S C_S \theta_{\varepsilon} )_t
\|_{L^{3/2}(I;W^{1,3}(\Omega)^*)}  \leq c.
\end{equation}
Since
\begin{displaymath}
W^{1,3/2}(\Omega) \hookrightarrow \hookrightarrow W^{1-\beta,3/2}(\Omega)
\hookrightarrow  W^{1,3}(\Omega)^*,
\end{displaymath}
where $\beta$ is a small positive real number, the Aubin-Lions lemma yields the existence
of $\chi \in L^{3/2}(I; W^{1-\beta,3/2}(\Omega))$ such that  (modulo a
subsequence)
\begin{equation}\label{conv_13}
C_w w_{\varepsilon_j} \theta_{\varepsilon_j} +  \rho_S C_S
\theta_{\varepsilon_j} \rightarrow \chi \qquad \textrm{strongly in }
L^{3/2}(I; W^{1-\beta,3/2}(\Omega)).
\end{equation}
Since \eqref{conv_13} yields the almost everywhere convergence and $w_{\varepsilon_j}$ converges
almost everywhere to $w$, we conclude
\begin{equation}\label{conv_14}
 \theta_{\varepsilon_j} \rightarrow \theta \qquad \textrm{almost everywhere in } Q_T.
\end{equation}
Hence, $C_w w_{\varepsilon_j} \theta_{\varepsilon_j} +  \rho_S C_S \theta_{\varepsilon_j} $
converges almost everywhere to $C_w w \theta +  \rho_S C_S \theta$ and
$\chi = C_w w \theta +  \rho_S C_S \theta$. Finally, \eqref{est_50} yields
\begin{equation}\label{conv_15}
(C_w w_{\varepsilon_j} \theta_{\varepsilon_j} +  \rho_S C_S
\theta_{\varepsilon_j})_t  \rightarrow (C_w w \theta +  \rho_S C_S
\theta)_t \qquad \textrm{weakly in } L^{3/2}(I;W^{1,3}(\Omega)^*).
\end{equation}

%Now we take the limit $j \rightarrow \infty$

By \eqref{conv_20}--\eqref{conv_22} and \eqref{conv_14} we conclude
\begin{equation}\label{conv_40}
\lambda(\theta^{\varepsilon}_{\varepsilon_j},w^{\varepsilon}_{\varepsilon_j})
(\theta_{\varepsilon_j})_x \rightarrow \lambda(\theta,w) \theta_x
\qquad \textrm{weakly in } L^{2}(Q_T),
\end{equation}
%further
\begin{equation}\label{conv_41}
\delta_w(\theta^{\varepsilon}_{\varepsilon_j},w^{\varepsilon}_{\varepsilon_j})
(w_{\varepsilon_j})_x  +
\delta_{\theta}(\theta^{\varepsilon}_{\varepsilon_j},w_{\varepsilon_j})
(\theta^{\varepsilon}_{\varepsilon_j})_x
\rightarrow
\delta_w(\theta,w)w_x + \delta_{\theta}(\theta,w)\theta_x \qquad
\textrm{weakly in } L^{2}(Q_T)
\end{equation}
and finally,
\begin{equation}\label{conv_42}
\theta_{\varepsilon_j}\left(
\delta_w(\theta^{\varepsilon}_{\varepsilon_j},w^{\varepsilon}_{\varepsilon_j})
(w_{\varepsilon_j})_x  +
\delta_{\theta}(\theta^{\varepsilon}_{\varepsilon_j},w_{\varepsilon_j})
(\theta^{\varepsilon}_{\varepsilon_j})_x \right)
\rightarrow
\theta\left(\delta_w(\theta,w)w_x +
\delta_{\theta}(\theta,w)\theta_x\right) \qquad \textrm{weakly in
}L^{3/2}(Q_T).
\end{equation}

Now let us present the convergence of the boundary conditions.
By \eqref{est1_w_a} and \eqref{conv_21} we deduce
that there exists a subsequence (not relabeled), such that
\begin{equation}\label{conv_16}
w_{\varepsilon_j}(0,\cdot) \rightarrow w(0,\cdot) \quad \textmd{ and
} \quad w_{\varepsilon_j}(\ell,\cdot) \rightarrow  w(\ell,\cdot)
\quad \textrm{strongly in }L^2(I) \textmd{ and almost everywhere in
} I.
\end{equation}

Taking fixed $x=0$ or $x=\ell$ and using \eqref{conv_13}
with $\chi = C_w w \theta +  \rho_S C_S \theta$ we conclude
\begin{eqnarray}
C_w w_{\varepsilon_j}(0,\cdot) \theta_{\varepsilon_j}(0,\cdot) +
\rho_S C_S \theta_{\varepsilon_j}(0,\cdot) &\rightarrow& C_w
w(0,\cdot)\theta(0,\cdot) + \rho_S C_S \theta(0,\cdot) \quad
\textrm{ almost everywhere in }I,
\nonumber\\
\label{conv_18a}
\\
C_w w_{\varepsilon_j}(\ell,\cdot) \theta_{\varepsilon_j}(\ell,\cdot)
+ \rho_S C_S \theta_{\varepsilon_j}(\ell,\cdot) &\rightarrow& C_w
w(\ell,\cdot)\theta(\ell,\cdot) + \rho_S C_S \theta(\ell,\cdot)
\quad \textrm{ almost everywhere in }I.
\nonumber\\
\label{conv_18b}
\end{eqnarray}

Now \eqref{conv_16}--\eqref{conv_18b} imply
\begin{equation}
\theta_{\varepsilon_j}(0,\cdot) \rightarrow \theta(0,\cdot) \quad \textmd{ and } \quad
\theta_{\varepsilon_j}(\ell,\cdot) \rightarrow  \theta(\ell,\cdot) \quad  \textrm{ almost everywhere in }I.
\label{conv_30}
\end{equation}

In order to get convergence results applicable to the radiative boundary conditions,
we need ``better'' uniform estimates than \eqref{est_30a}.
\begin{lemma}\label{bound_theta_2}
Let
$[\theta_{\varepsilon},w_{\varepsilon}] \in
\mathbf{V}^{2,1}_2(Q_T)$
be the solution of
the system \eqref{eq30}--\eqref{eq39}.
Then $\theta_{\varepsilon}$ is uniformly bounded in $L^{\infty}(I;L^4(\Omega))$, i.e.
\begin{equation}\label{est13_theta}
\|\theta_{\varepsilon} \|_{L^{\infty}(I;L^4(\Omega))}  \leq c,
\end{equation}
where the constant $c$ does not depend on $\varepsilon$.
\end{lemma}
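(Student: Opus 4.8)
The strategy is to obtain an energy-type estimate by testing the energy equation \eqref{eq31} with a power of $\theta$ higher than the one used for \eqref{est1_theta_b}; specifically, I would multiply \eqref{eq30} by $C_w\theta^3$ and \eqref{eq31} by $\tfrac{4}{3}\theta^2$ (or equivalently work directly with the combined equation \eqref{eq31}), so that the parabolic term produces $\int_\Omega \theta^4(C_w w+\rho_S C_S)$ at time $t$ plus a nonnegative gradient term involving $\lambda(\theta^\varepsilon,w^\varepsilon)\theta^2|\theta_x|^2$. The key point is that the leading coefficient $C_w w+\rho_S C_S$ is bounded below by the constant $\rho_S C_S>0$ (assumption $A_1$, together with $w\ge 0$ from Step 2 of the proof of Theorem \ref{th_sup_5}), so control of $\int_\Omega\theta^4(C_w w+\rho_S C_S)$ gives control of $\|\theta_\varepsilon(t)\|_{L^4(\Omega)}^4$ uniformly in $t$.

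First I would write down the weak identity obtained from this test, collect the nonnegative terms on the left (the gradient term, the dehydration terms $\int C_w\theta^4 d_s$ and, via $A_4$, the boundary term $\int_0^t C_w\theta(\ell,s)^4\beta_c\mathcal{P}(\theta^\varepsilon(\ell,s),w(\ell,s))\,\mathrm{d}s$, and the positive part of the radiative boundary term $\int_0^t \theta(\ell,s)^4(\alpha_c+e\sigma|\theta^\varepsilon(\ell,s)|^3)\,\mathrm{d}s$), and move the remaining terms to the right. On the right-hand side there appear: the initial term $\int_\Omega \theta_0^\varepsilon{}^4(C_w w_0^\varepsilon+\rho_S C_S)$, which is uniformly bounded by $A_1$ and $A_6$; a dehydration coupling term $\int_{Q_t} h_d\theta^3 d_s$, which by Young's inequality is bounded by $c\int_{Q_t}\theta^4 + c\int_{Q_t}d_s^4$, the latter being uniformly bounded since $0\le d_s\le \tfrac{1}{\tau}d_{eq}(\theta)\le d_1/\tau$ by $A_3$ and Step 1; and the loading/ambient boundary terms $\zeta\int_0^t C_w\theta(\ell,s)^3\beta_c P_\infty\,\mathrm{d}s$ and $\zeta\int_0^t \theta(\ell,s)^3\vartheta^\varepsilon(s)\,\mathrm{d}s$. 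These last two are the terms needing care: using Young's inequality I would split off a small multiple of $\int_0^t\theta(\ell,s)^4\,\mathrm{d}s$ (absorbed into the nonnegative boundary term on the left) plus a term $C\int_0^t(1+|\vartheta^\varepsilon(s)|^4)\,\mathrm{d}s$, which is uniformly bounded by $A_5$ and the boundedness of the mollification. After absorption, what remains on the right is $c + c\int_{Q_t}\theta^4\,\mathrm{d}x\,\mathrm{d}s$, and Gronwall's lemma then yields $\|\theta_\varepsilon\|_{L^\infty(I;L^4(\Omega))}\le c$ with $c$ independent of $\varepsilon$.

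One technical point to be careful about is that the test function $\theta^3$ (or $\theta^2$ against \eqref{eq31}) is admissible: this is justified because $[\theta_\varepsilon,w_\varepsilon]\in\mathbf{V}^{2,1}_2(Q_T)\hookrightarrow C(\overline{Q}_T)$, so $\theta_\varepsilon$ is bounded on each $\overline{Q}_T$ (for fixed $\varepsilon$) and all the integrals make sense; the resulting estimate then happens to be uniform in $\varepsilon$. I would also note that the boundary traces in the radiative term are handled exactly as in the basic a priori estimates, since $|\theta^\varepsilon(\ell,s)|^3\ge 0$ makes that contribution favourable. The main obstacle is the bookkeeping for the quartic boundary terms: one must verify that the exponents produced by Young's inequality on $\theta(\ell,s)^3\vartheta^\varepsilon(s)$ and on the trace-interpolation estimate (the analogue of \cite[Remark 4]{FiloKacur1995} used in \eqref{sup_02}, now applied to $\theta^2$ in place of $\theta$, giving $\int_0^t\theta(\ell,s)^4\le \epsilon\int_{Q_t}|(\theta^2)_x|^2 + C(\epsilon)\int_{Q_t}\theta^4 = 4\epsilon\int_{Q_t}\theta^2|\theta_x|^2 + C(\epsilon)\int_{Q_t}\theta^4$) combine so that everything with a gradient is absorbed into the left-hand side coercive term $c\int_{Q_t}\theta^2|\theta_x|^2$ (available from \eqref{con11e}), leaving a closed Gronwall inequality in $\int_\Omega\theta^4(\cdot,t)$.
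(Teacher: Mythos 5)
Your proof follows the paper's argument essentially verbatim: test \eqref{eq30} and \eqref{eq31} with quartic/cubic powers of $\theta$ to produce $\int_\Omega\theta^4(C_w w+\rho_S C_S)\,{\rm d}x$ plus the coercive term $\int_{Q_t}\theta^2\lambda|\theta_x|^2$, use $w\ge 0$ and $A_4$ for the sign of the $\mathcal{P}$-boundary term, estimate the dehydration and loading terms by Young's inequality together with the trace interpolation of \cite[Remark 4]{FiloKacur1995} applied to $\theta^2$, and close with Gronwall. The only blemish is that the multipliers you state at the outset ($C_w\theta^3$ and $\tfrac{4}{3}\theta^2$) are off by one power --- literally applied they would yield an $L^3$, not $L^4$, identity (and with the non-sign-definite weight $\theta^3$) --- but this is evidently a slip, since every term you subsequently manipulate is exactly the one arising from the paper's multipliers $C_w\theta^4$ and $4\theta^3$.
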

\begin{proof}
Test \eqref{eq30} by $C_w \theta^4$ and \eqref{eq31} by $4\theta^3$ to
obtain (adding the both resulting equations)
\begin{align}\label{eq60}
& \int_{\Omega} \theta(x,t)^4 (C_w w(x,t) + \rho_S C_S ) {\rm d}{x}
+ \int_{Q_t}
12\theta^2\lambda({\theta}^{\varepsilon},{w}^{\varepsilon})\theta_x^2
{\rm d}x{\rm d}s+ \int_{Q_t} 3C_w \theta^4 d_s {\rm d}x{\rm d}s
\nonumber \\
&\quad + \int_{Q_t} 4 h_d \theta^3 d_s {\rm d}x{\rm d}s
+ \int_0^t
C_w \theta(\ell,s)^4
 \beta_c(\mathcal{P}(\theta^{\varepsilon}(\ell,s),w(\ell,s)) - \zeta P_{\infty})
 {\rm d}s
\nonumber \\
& \quad + \int_0^t 4\theta(\ell,s)^3 \left( (\alpha_c +
e\sigma|\theta^{\varepsilon}(\ell,s)|^3)\theta(\ell,s)
 - \zeta\vartheta^{\varepsilon}(s) \right) {\rm d}s
\nonumber \\
& =  \int_{\Omega} \theta(x,0)^4 (C_w w(x,0) + \rho_S C_S ) {\rm
d}{x}.
\end{align}
Simple calculation directly leads to the inequality
\begin{align}\label{eq60b}
& c_1 \int_{\Omega} \theta(x,t)^4 {\rm d}{x} + c_2 \int_{Q_t}
\theta^2\theta_x^2 {\rm d}x{\rm d}s + c_3 \int_0^t \theta(\ell,s)^4
{\rm d}s
\nonumber \\
& \leq \int_{\Omega} \theta(x,0)^4 (C_w w(x,0) + \rho_S C_S ) {\rm
d}{x} - \int_{Q_t} 4 h_d \theta^3 d_s {\rm d}x{\rm d}s
\nonumber \\
& \qquad + \int_0^t C_w \theta(\ell,s)^4 \beta_c  P_{\infty} {\rm
d}s + \int_0^t 4\theta(\ell,s)^3 \vartheta^{\varepsilon}(s)  {\rm
d}s.
\end{align}
The integrals on the right hand side of \eqref{eq60b} can be estimated using Young's inequality
and \cite[Remark 4]{FiloKacur1995}, respectively, in the following manner:
\begin{equation}\label{sup_11}
\int_{Q_t} 4 h_d \theta^3 d_s {\rm d}x{\rm d}s
\leq
3 h_d \int_{Q_t}
\theta^4 {\rm d}x{\rm d}s + h_d \int_{Q_t} d_s^4(x,s) {\rm d}x{\rm d}s,
\end{equation}
further,
\begin{equation}\label{sup_12}
\int_0^t C_w \theta(\ell,s)^4 \beta_c  P_{\infty} {\rm d}s
\leq
C_w \beta_c  P_{\infty} \left( \int_{Q_t}  \epsilon 4\theta^2 \theta_x^2 +
C(\epsilon)\theta^4 {\rm d}x{\rm d}s \right)
\end{equation}
and finally,
\begin{equation}\label{sup_13}
\int_0^t 4\theta(\ell,s)^3 \vartheta^{\varepsilon}(s)  {\rm d}s
\leq
\epsilon \int_0^t \theta^4(\ell,s){\rm d}s + C(\epsilon)\int_0^t
\vartheta^{\varepsilon}(s)^4{\rm d}s,
\end{equation}
where $\epsilon$ represents sufficiently small positive real number.
Now, taking \eqref{eq60b}--\eqref{sup_13} together, we obtain
\begin{equation}\label{eq65c}
c_1 \int_{\Omega} \theta(x,t)^4 {\rm d}{x}
\leq
c_2 + c_3 \int_{Q_t} \theta^4{\rm d}x{\rm d}s,
\end{equation}
which yields, applying the Gronwall's inequality, the uniform estimate \eqref{est13_theta}.
The proof is complete.

\end{proof}

By the Sobolev embedding theorem (see \cite[Theorem 4.12]{AdamsFournier1992}
or \cite[Theorem 8.1.2]{KufFucJoh1977}) we have
\begin{equation}\label{emb_20}
W^{1,2}(\Omega) \hookrightarrow W^{3/4,4}(\Omega).
\end{equation}
Raising and integrating the
interpolation inequality \cite[Theorem 5.2]{AdamsFournier1992}
($\epsilon$ means small positive real number)
\begin{equation}\label{int_ineq_010}
\|\theta_{\varepsilon}\|_{W^{1/4+\epsilon,4}(\Omega)}
\leq c
\|\theta_{\varepsilon}\|^{(1+4\epsilon)/3}_{W^{3/4,4}(\Omega)}
\|\theta_{\varepsilon}\|^{(2-4\epsilon)/3}_{L^4(\Omega)}
\end{equation}
from $0$ to $T$ we get
\begin{eqnarray}\label{est_121}
 \left( \int^T_0 \|\theta_{\varepsilon}\|^{6/(1+4\epsilon)}_{W^{1/4+\epsilon,4}(\Omega)}
 {\rm d}t \right)^{(1+4\epsilon)/6}
 &\leq&
 c
 \left( \int^T_0 \|\theta_{\varepsilon}\|^2_{W^{3/4,4}(\om)}
\|\theta_{\varepsilon}\|^{4(1-2\epsilon)/(1+4\epsilon)}_{L^4(\Omega)} {\rm d}t
\right)^{(1+4\epsilon)/6}
\nonumber\\
&\leq&
c
\| \theta_{\varepsilon} \|^{(1+4\epsilon)/3}_{{L}^{2}(I;W^{3/4,4}(\Omega))}
\| \theta_{\varepsilon} \|^{(2-4\epsilon)/3}_{L^{\infty}(I;L^4(\om))}.
\end{eqnarray}
Now taking into account \eqref{est_30a}, \eqref{est13_theta},
\eqref{emb_20} and \eqref{est_121} we arrive at the estimate
\begin{equation}
\|\theta_{\varepsilon}\|_{L^{6/(1+4\epsilon)}(I;W^{1/4+\epsilon,4}(\Omega))}
\leq c.
\end{equation}
Consequently, we get (along a selected subsequence)
\begin{equation}\label{conv_50}
\theta_{\varepsilon_j}(0,\cdot) \rightarrow \theta(0,\cdot) \quad
\textmd{ and } \quad \theta_{\varepsilon_j}(\ell,\cdot) \rightarrow
\theta(\ell,\cdot) \qquad \textrm{weakly in }L^p(I),\; 1 \leq p < 6.
\end{equation}

The strong solution $[\theta_{\varepsilon},w_{\varepsilon}] \in
\mathbf{V}^{2,1}_2(Q_T)$ of the problem \eqref{eq30}--\eqref{eq39}
 (ensured by Theorem \ref{th_sup_5}) is a solution of the system
\eqref{eq1d}--\eqref{eq3d} with the boundary and initial conditions \eqref{eq39c}--\eqref{eq39}
and satisfies the variational problem
(corresponding to \eqref{eq1d}--\eqref{eq3d} and \eqref{eq39c}--\eqref{eq39})

\begin{align}\label{weak_form_limit}
& -\int_{Q_T} \left( w_{\varepsilon} - d_{\varepsilon} \right)
\phi_t +\left( C_w \theta_{\varepsilon} w_{\varepsilon} + \rho_S C_S
\theta_{\varepsilon} + h_d d_{\varepsilon} \right) \psi_t {\rm
d}x{\rm d}t
\nonumber \\
& \qquad +\int_{Q_T} \left(
\delta_w(\theta^{\varepsilon}_{\varepsilon},w^{\varepsilon}_{\varepsilon})
(w_{\varepsilon})_x  +
\delta_{\theta}(\theta^{\varepsilon}_{\varepsilon},w_{\varepsilon})
(\theta^{\varepsilon}_{\varepsilon})_x \right) \phi_x +
\lambda(\theta^{\varepsilon}_{\varepsilon},w^{\varepsilon}_{\varepsilon})
(\theta_{\varepsilon})_x \psi_x {\rm d}x{\rm d}t
\nonumber \\
& \qquad -\int_{Q_T} \theta_{\varepsilon}\left(
\delta_w(\theta^{\varepsilon}_{\varepsilon},w^{\varepsilon}_{\varepsilon})
(w_{\varepsilon})_x  +
\delta_{\theta}(\theta^{\varepsilon}_{\varepsilon},w_{\varepsilon})
(\theta^{\varepsilon}_{\varepsilon})_x \right) \psi_x \;{\rm d}x{\rm
d}t
\nonumber  \\
& \qquad +\int_0^T \beta_c
\left(\mathcal{P}(\theta^{\varepsilon}_{\varepsilon}(\ell,t),w_{\varepsilon}(\ell,t))
- P_{\infty}\right) \phi(\ell,t) \;{\rm d}t
\nonumber \\
& \qquad + \int_0^T
\left[(\alpha_c+e\sigma|\theta^{\varepsilon}_{\varepsilon}(\ell,t)|^3)\theta_{\varepsilon}(\ell,t)
 - \vartheta^{\varepsilon}(t)\right]\psi(\ell,t)  {\rm d}t
\nonumber \\
& \qquad + \int_0^T  C_w \theta_{\varepsilon}(\ell,t) \beta_c
 \left(\mathcal{P}(\theta^{\varepsilon}_{\varepsilon}(\ell,t),
 w_{\varepsilon}(\ell,t)) - P_{\infty}\right)\psi(\ell,t)
{\rm d}t
\nonumber \\
& = \int_{\Omega} w^{\varepsilon}_0\phi(x,0) + \left( C_w
\theta^{\varepsilon}_0 w^{\varepsilon}_0 + \rho_S C_S
\theta^{\varepsilon}_0 \right)\psi(x,0) {\rm d}x
\end{align}
for all test functions $[\phi,\psi]\in
C^{\infty}(\overline{Q}_T)$, $\phi(x,T)=\psi(x,T)=0$ $\forall x \in
\Omega$.

The above established convergences \eqref{conv_20}--\eqref{conv_22},
\eqref{conv_15}--\eqref{conv_16} and \eqref{conv_50}
are sufficient for taking the limit $\varepsilon_j \rightarrow 0$
as $j \rightarrow \infty$ in \eqref{weak_form_limit}
(along a selected subsequence) to get the weak solution
of the system \eqref{eq1b}--\eqref{eq10b} satisfying \eqref{weak_form}.
This completes the proof of the main result stated by Theorem \ref{main_result}.

%%%%%%%%%%%%%%%%%%%%%%%%%%%%%%%%%%%%%%%%%%%%%%%%%%%%%%%%%%

\section{Illustration of transport processes
in concrete walls at elevated temperatures}\label{sec:numerical_exp}

\subsection{Illustrative example}

 We assume a concrete wall with a width of $120~{\rm mm}$
 subjected to fire on one side while on the other side,
 the wall is assumed to be thermal and moisture insulated,
 see Fig.~\ref{Anlysed_Wall}. Three different fire scenarios,
 which represent the time dependency of the ambient temperature
 on the exposed side, are employed: (i) standard fire curve (ISO fire),
 (ii) hydrocarbon fire curve (HC fire), and (iii) parametric fire curve (PM fire),
 see Fig.~\ref{Fire_Curves}. The last one is more sophisticated than the others
 since for the PM fire, the temperature is dependent not only
 on the time of fire but also on the specific parameters of
 a given fire compartment (in our case, the following
 parameters are assumed: $q_{td} = 160~{\rm{MJ\,m^{-2}}}$,
 $O = 0.12~{\rm{m^{1/2}}}$, $b = 1000~{\rm{J\,m^{-2}\,s^{-1/2}\,K^{-1}}}$,
 fire growth rate: medium, see~\cite{Eurocode1}). As shown in Fig.~\ref{Fire_Curves},
 this curve also includes a decreasing branch that simulates a cooling phase of a fire.
 More information about the fire scenarios mentioned above can be found in~\cite{Eurocode1}.

\begin{figure}[h]
\begin{minipage}[b]{.5\textwidth}
%\begin{center}
\includegraphics[angle=0,width=5.2cm]{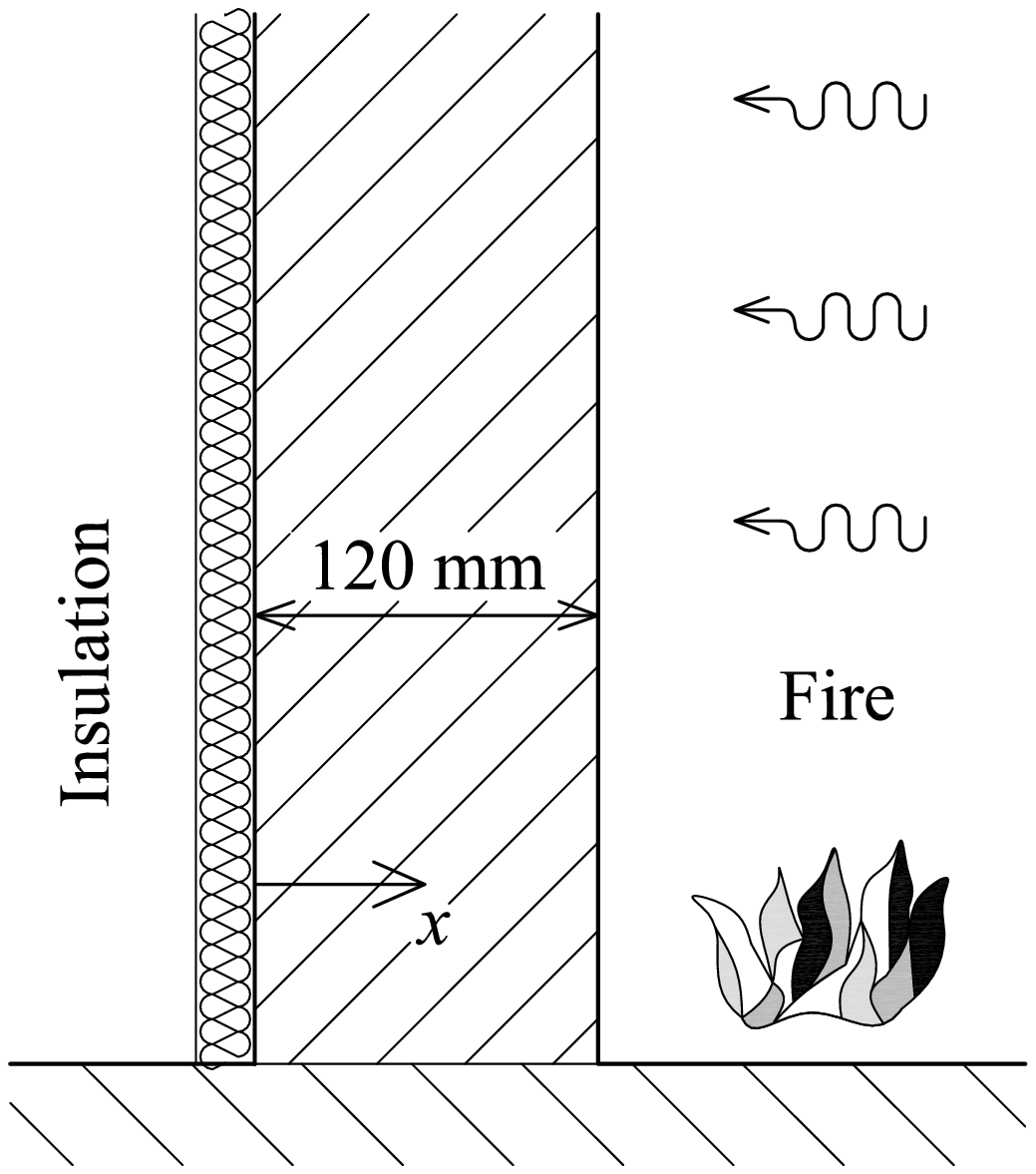}
\caption{Analyzed wall.}
\label{Anlysed_Wall}
%\end{center}
\end{minipage}
\begin{minipage}[b]{.5\textwidth}
%\begin{center}
\includegraphics[angle=0,width=8cm]{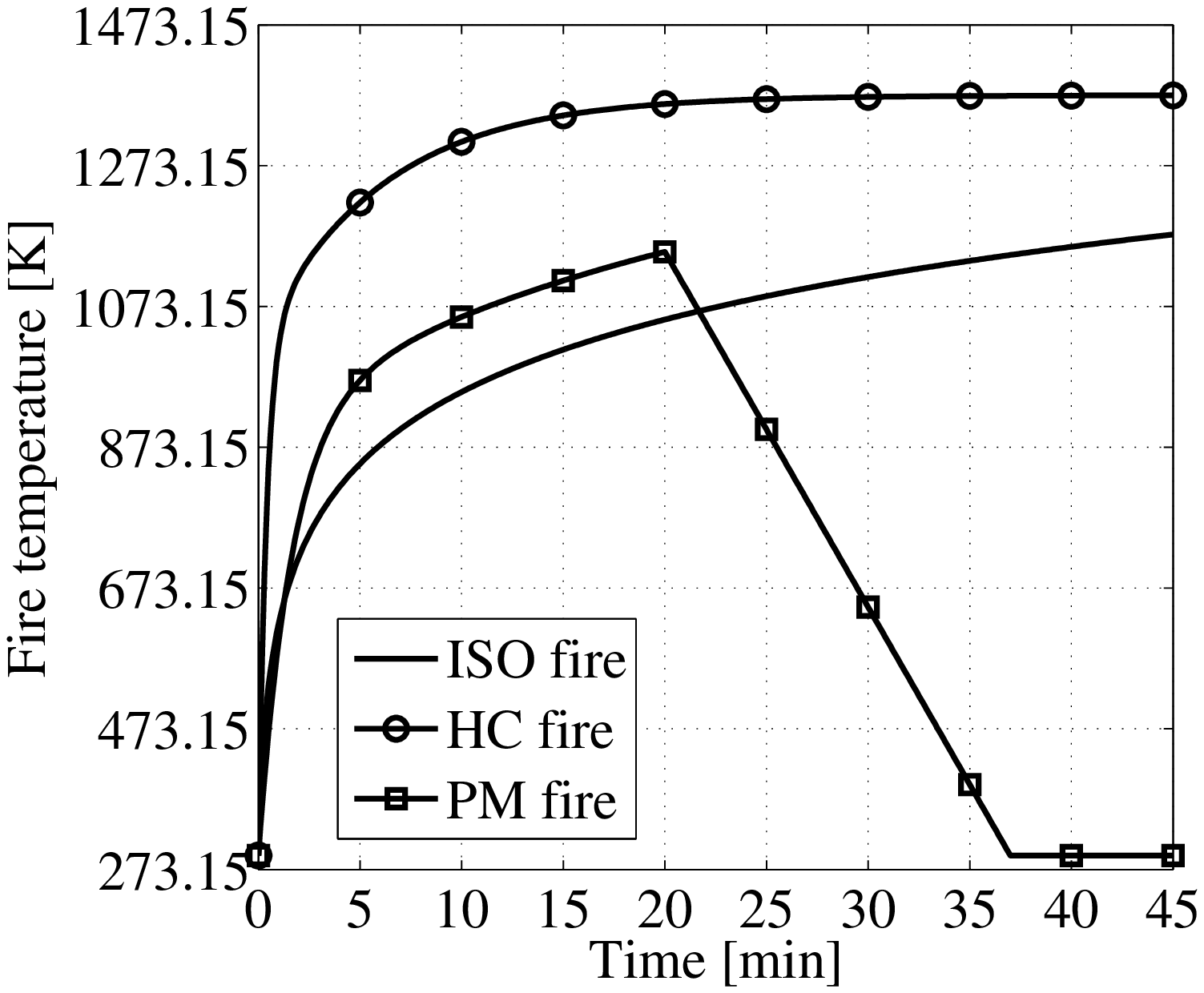}
\caption{Fire curves.}
\label{Fire_Curves}
%\end{center}
\end{minipage}
\end{figure}

 On the exposed side of the analyzed wall, the constant parameters
 of the boundary conditions are set to $\beta_c = 0.019~{\rm{m\,s^{-1}}}$,
 $P_{\infty} = 1.7542 \times 10^3~{\rm{Pa}}$, $e = 0.7$,
 $\sigma = 5.67 \times 10^{-8}~{\rm{W\,m^{-2}\,K^{-4}}}$,
 and $\alpha_{c} = 25,~50,~{\rm{or}}~35~{\rm{W\,m^{-2}\,K^{-1}}}$
 for ISO fire, HC fire or PM fire, respectively.
 The temperature $\theta_{\infty}(t)$ is assumed
 to be equal to the fire temperature according to Fig.~\ref{Fire_Curves}.

 The uniform initial conditions $\theta_0 = 293.15~{\rm{K}}$
 and $w_0 = 71.01~{\rm{kg\,m^{-3}}}$ are considered.

\subsubsection{Material data}
The material properties of concrete at high temperatures are assumed as follows.

The thermal conductivity of concrete $\Lambda = \Lambda(\theta,P)$
is adopted from~\cite[(46)--(47)]{GaMaSch1999}, where $\Lambda_{d0}
= 1.3863~{\rm{W\,m^{-1}\,K^{-1}}}$, $A_{\Lambda} =
-0.0007272~{\rm{K^{-1}}}$ (see~\cite[Tab.~2]{GaPeSch2011}). The
concrete porosity $n = n(\theta)$ is assumed according
to~\cite[(A7)]{TenLiPur2001}, with $n_0 = 0.1$, and the saturation
$S = S(\theta,P)$ according to~\cite[(10)]{ChungCon2005}. The
permeability of concrete $\kappa = \kappa(\theta,P) = a(\theta,P)/g$
is taken from~\cite[(12a)--(12b)]{BaTh1978}, where $a_0 =
10^{-13}~{\rm{m\,s^{-1}}}$. The mass of dehydrated water is
determined by Eq.~\eqref{eq1b_dehydr}, which is adopted
from~\cite[(M-4)]{PontEhrlacher}). Here, the term $d_{eq}(\theta)$
is taken from ~\cite[(1.9)]{Alnajim2004}) with
$d_{eq}^{378.15~{\rm{K}}} = 330~{\rm{kg\,m^{-3}}}$ and $\tau =
10800~{\rm{s}}$. The sorption isotherm functions are assumed
according to~\cite{BaTh1978} with some modification based
on~\cite[(73)]{Davie:2010:FG} and~\cite[(29)]{DwaiKod2009}. Here,
the mass of anhydrous cement per unit volume of concrete $c =
250~{\rm{kg\,m^{-3}}}$ and the saturation water content at the room
temperature $w_{0s} = 100~{\rm{kg\,m^{-3}}}$. The parameters $h_d$,
$\rho_S$, $C_S$ and $C_w$ are considered to be constant values,
namely $h_d = 2.5 \times 10^6~{\rm{J\,kg^{-1}}}$, $\rho_S =
2400~{\rm{kg\,m^{-3}}}$, $C_S = 900~{\rm{J\,kg^{-1}\,K^{-1}}}$ and
$C_w = 2080~{\rm{J\,kg^{-1}\,K^{-1}}}$.

\subsubsection{Numerical procedure}
In order to obtain an approximate solution of the nonlinear model,
the well-known Galerkin procedure can be employed.
The spacial discretization is performed by the
one-dimensional finite element method.
We consider linear elements with the element
size of $0.0005~{\rm{m}}$ (240 elements in total).
The time discretization is carried out by a semi-implicit
difference scheme. In our case, we assume the time step $\Delta t = 0.5~{\rm{s}}$.
The numerical procedure is described in detail in~\cite{Benes2011}.

An algorithm arising from the numerical scheme mentioned above
has been included in an in-house MATLAB code, which is
employed to determine the distribution of temperature,
pore pressure and water content in the analyzed wall.

\subsubsection{Results}

\begin{figure}%[h]
\begin{minipage}[b]{.5\textwidth}
\begin{center}
\includegraphics[angle=0,width=7.0cm]{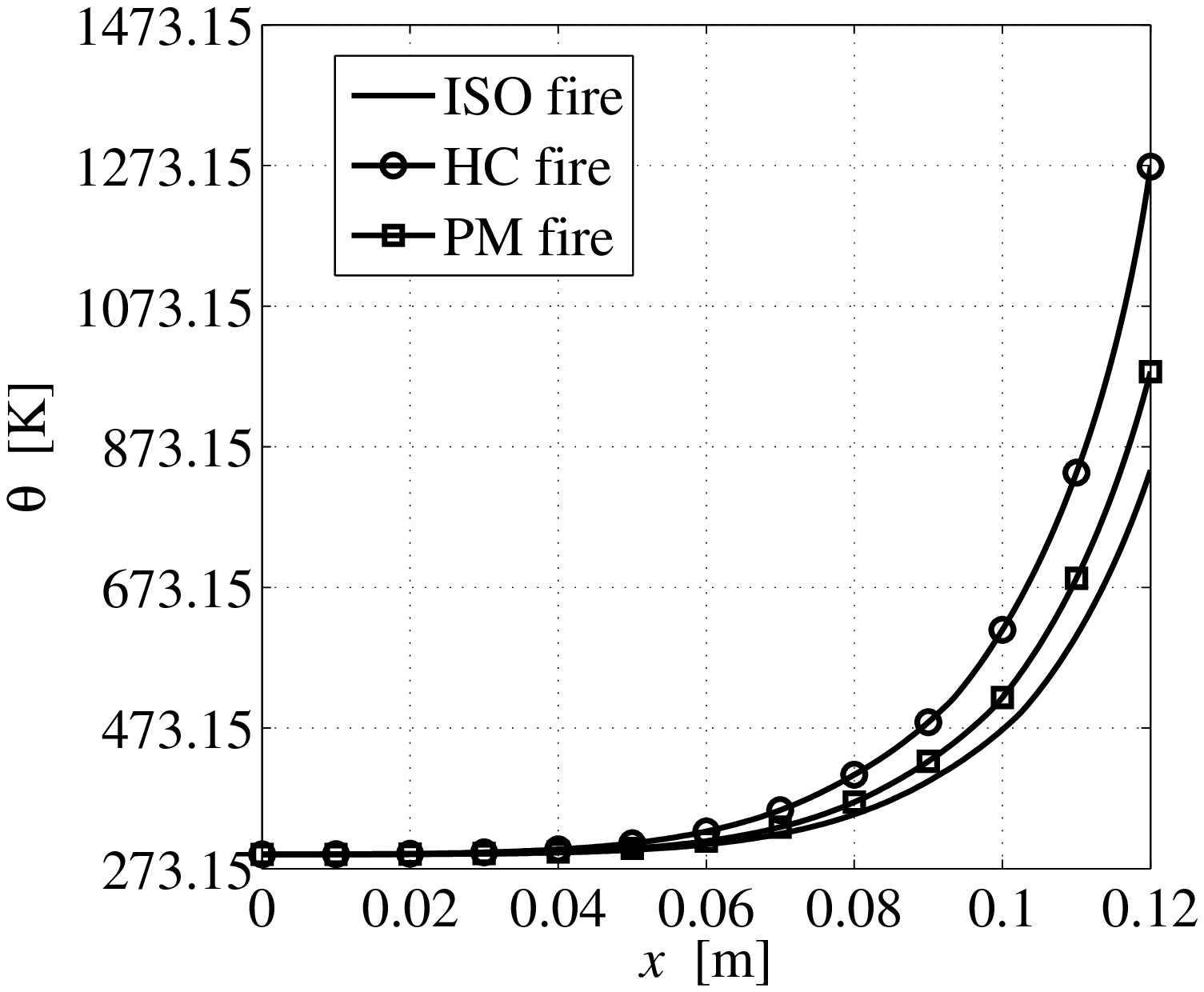}
\end{center}
\end{minipage}
\begin{minipage}[b]{.5\textwidth}
\begin{center}
\includegraphics[angle=0,width=7.0cm]{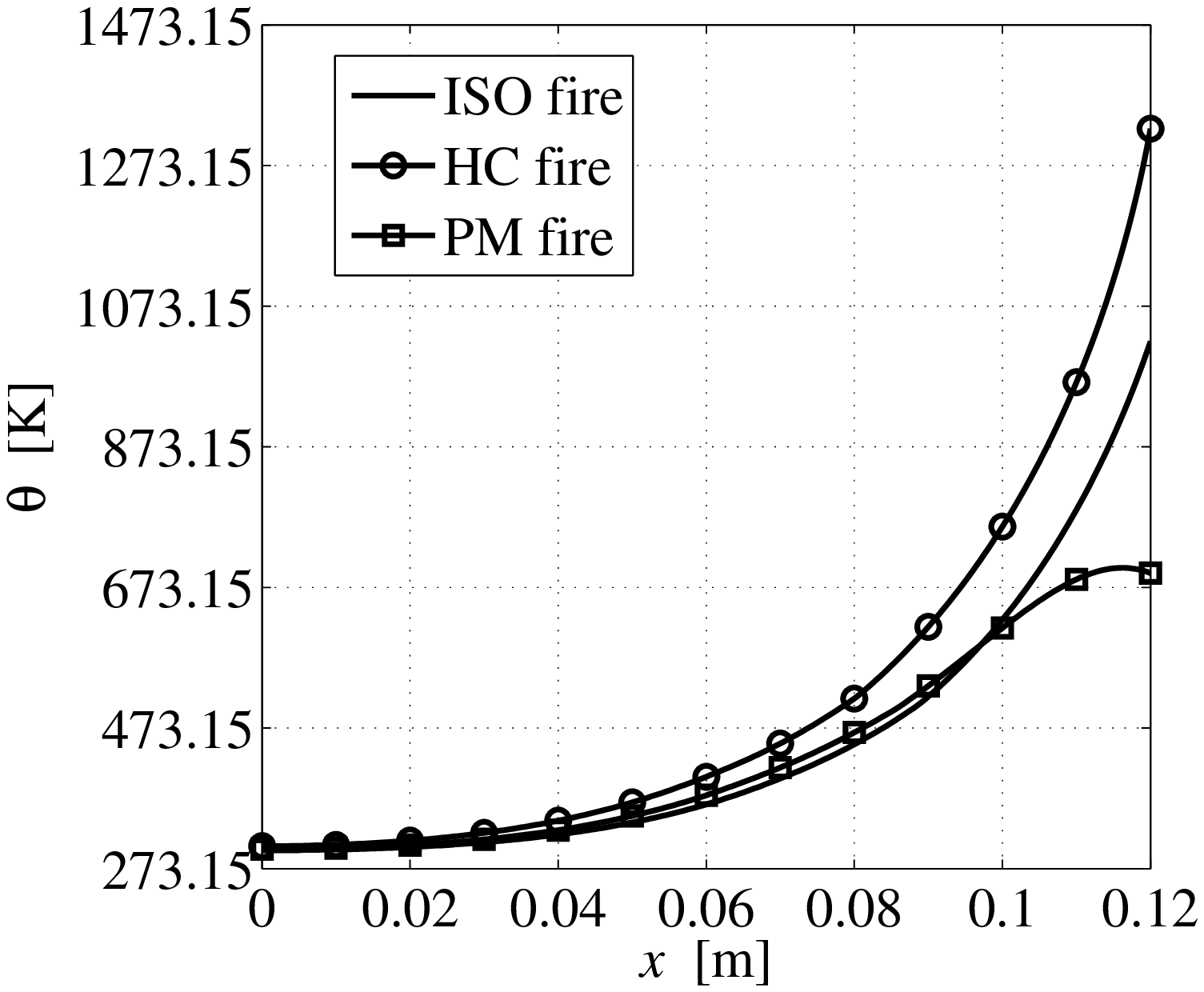}
\end{center}
\end{minipage}
\begin{minipage}[b]{.5\textwidth}
\begin{center}
\includegraphics[angle=0,width=7.0cm]{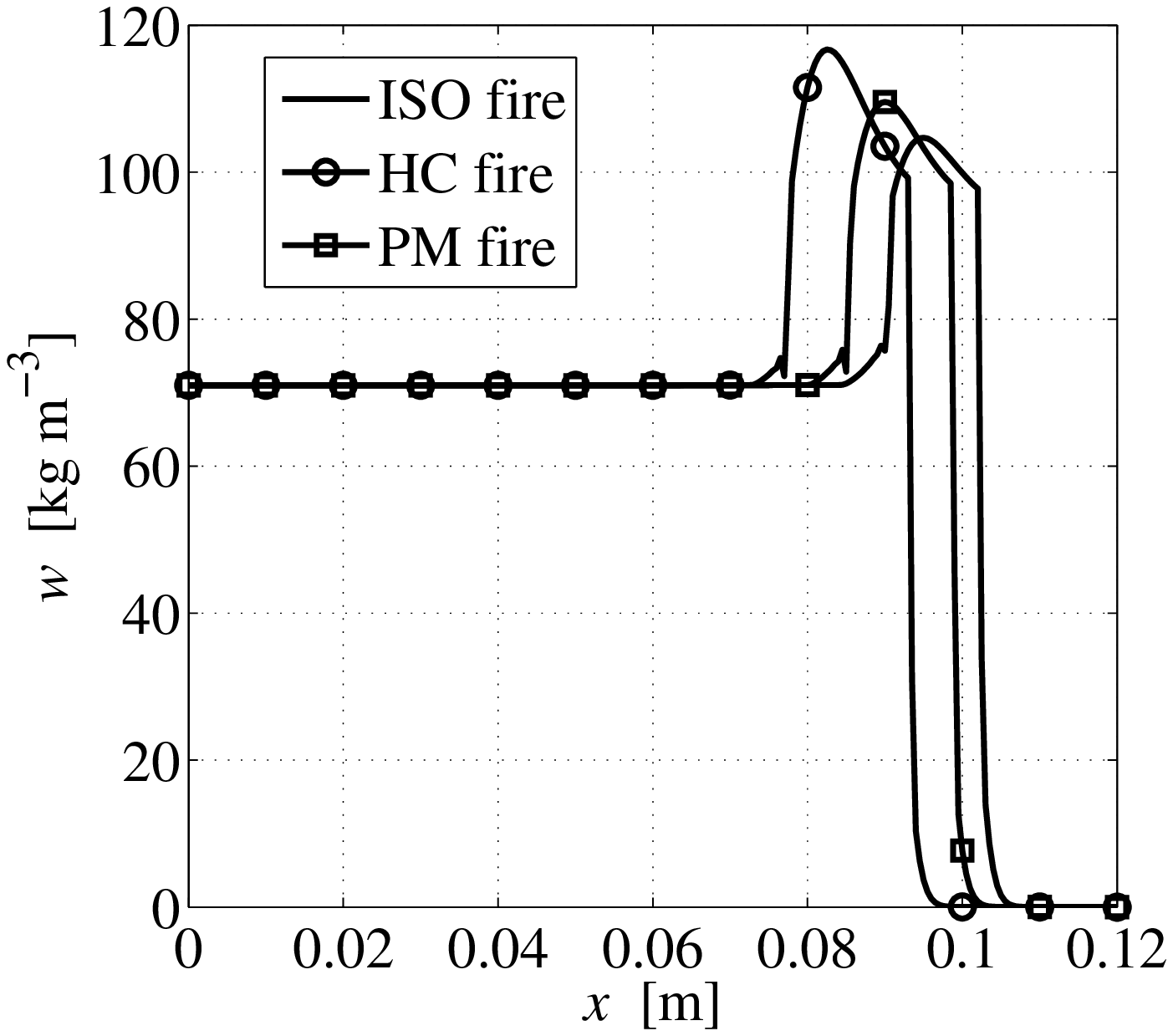}
\end{center}
\end{minipage}
\begin{minipage}[b]{.5\textwidth}
\begin{center}
\includegraphics[angle=0,width=7.0cm]{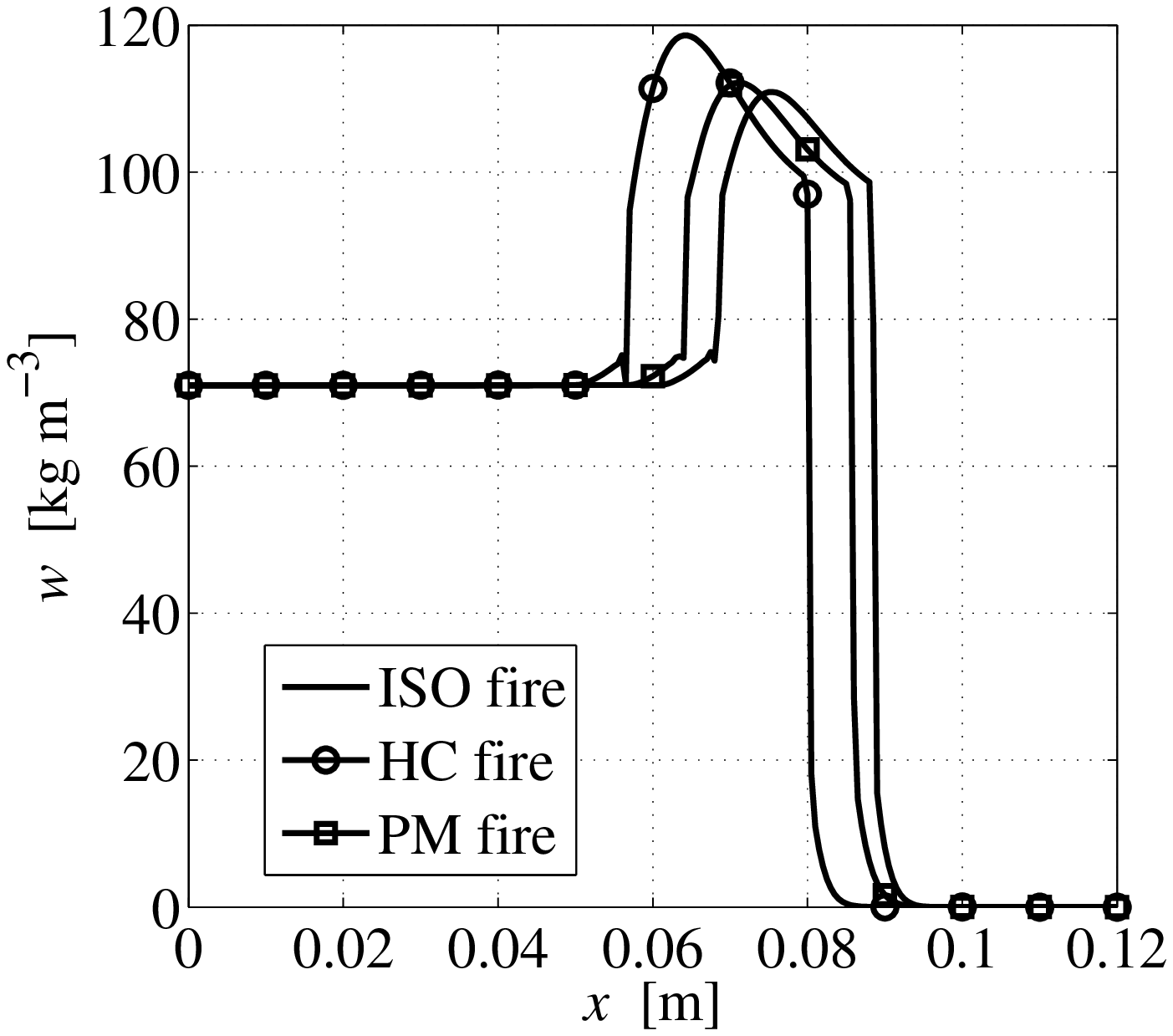}
\end{center}
\end{minipage}
\begin{minipage}[b]{.5\textwidth}
\begin{center}
\includegraphics[angle=0,width=7.0cm]{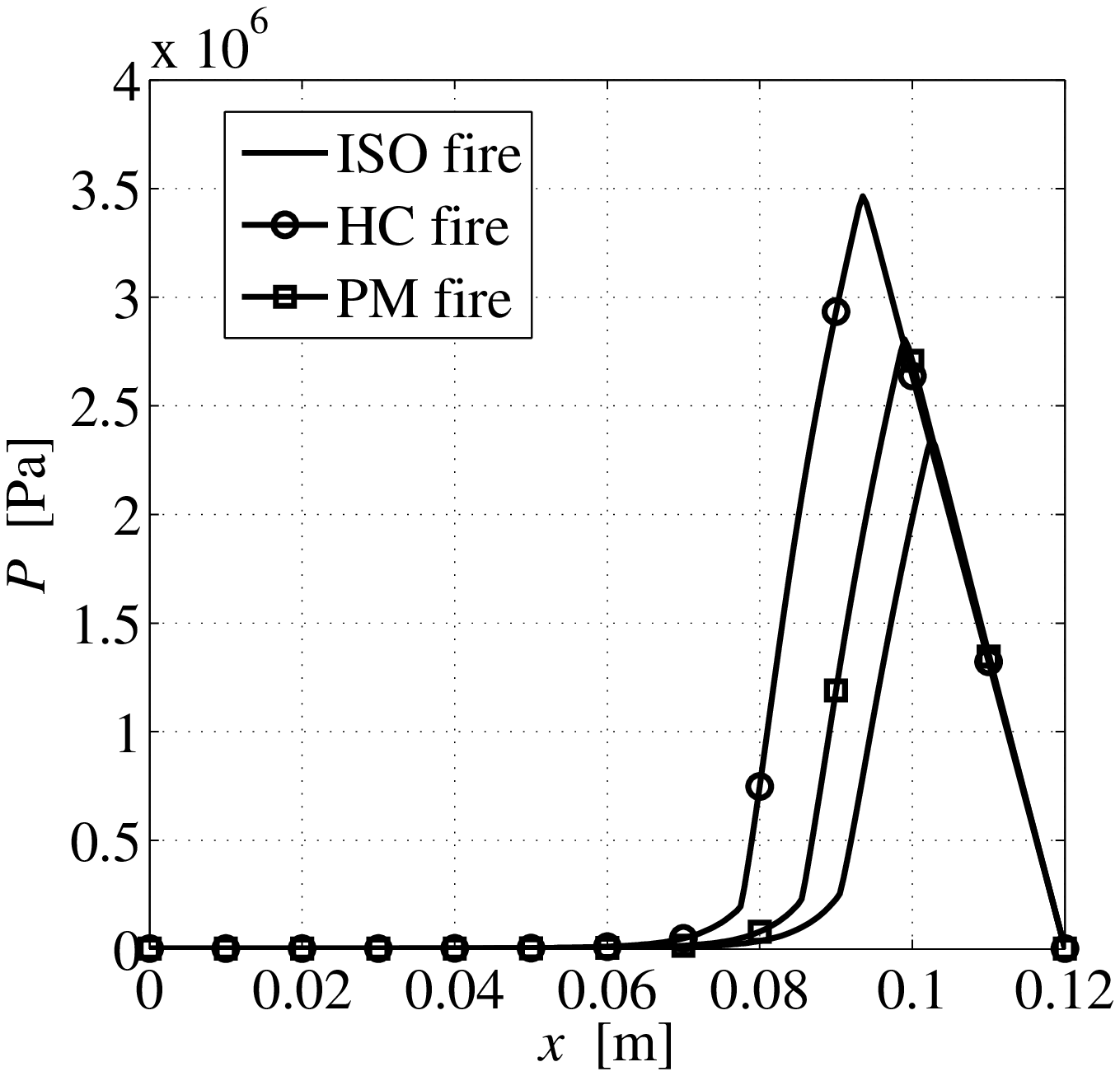}
\end{center}
\end{minipage}
\begin{minipage}[b]{.5\textwidth}
\begin{center}
\includegraphics[angle=0,width=7.0cm]{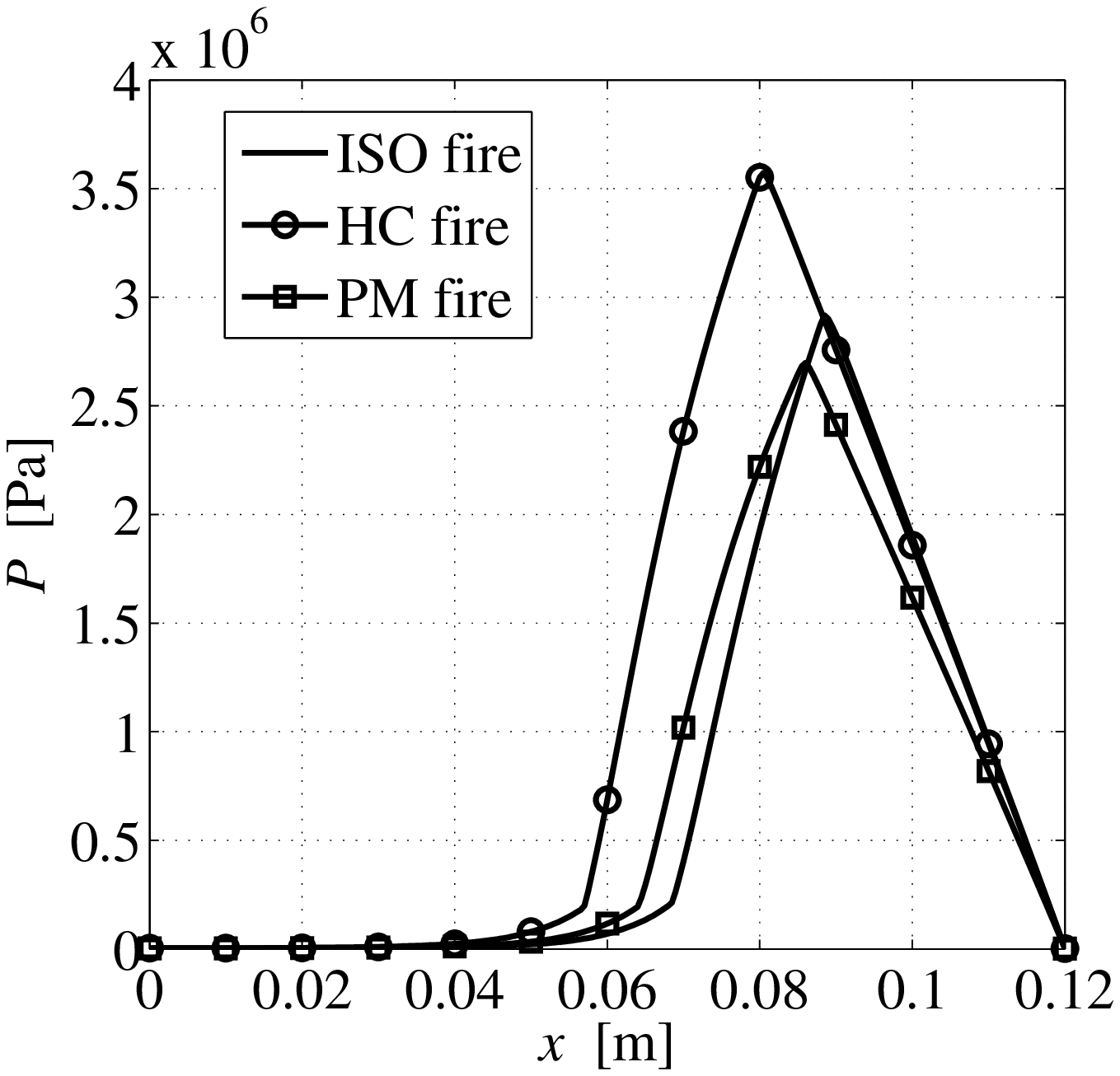}
\end{center}
\end{minipage}
\caption{Temperature, water content and pore pressure
distribution across the analyzed wall: fire exposure of
15 minutes (left) and 30 minutes (right).}\label{1D_Results}
\end{figure}

\begin{figure}
\begin{minipage}[b]{.5\textwidth}
\begin{center}
\includegraphics[angle=0,width=7.0cm]{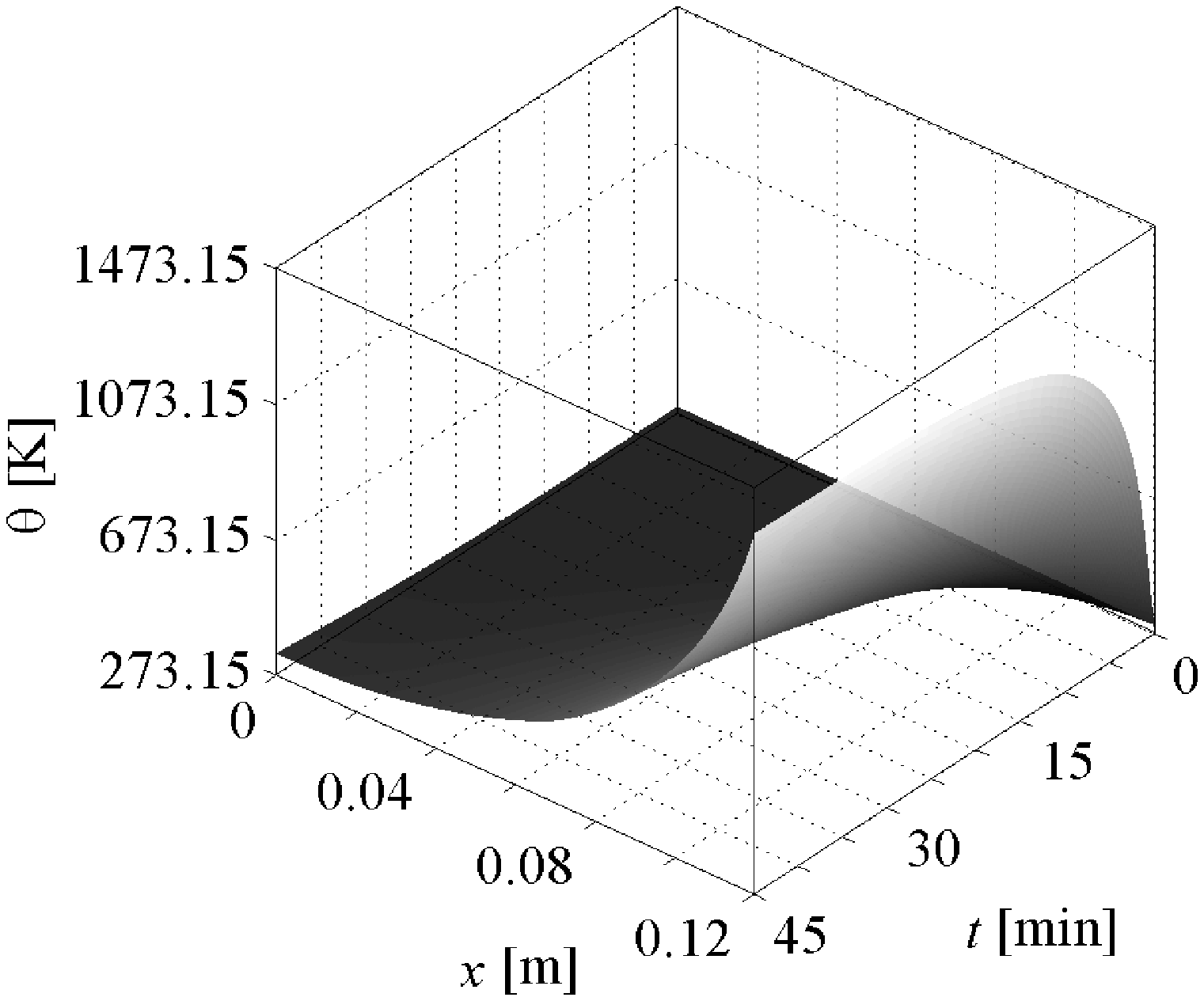}
\end{center}
\end{minipage}
\begin{minipage}[b]{.5\textwidth}
\begin{center}
\includegraphics[angle=0,width=7.0cm]{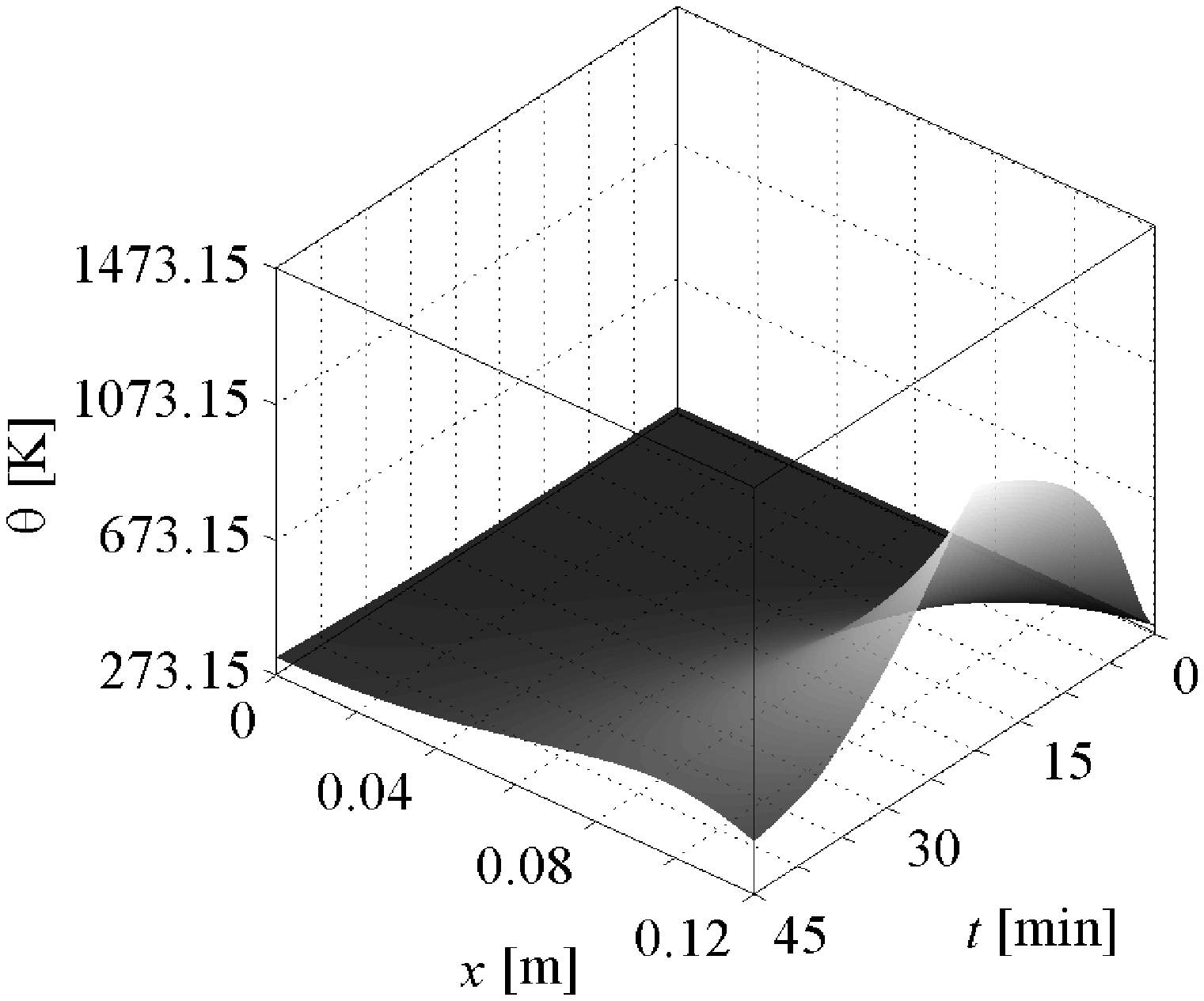}
\end{center}
\end{minipage}

\begin{minipage}[b]{.5\textwidth}
\begin{center}
\includegraphics[angle=0,width=7.0cm]{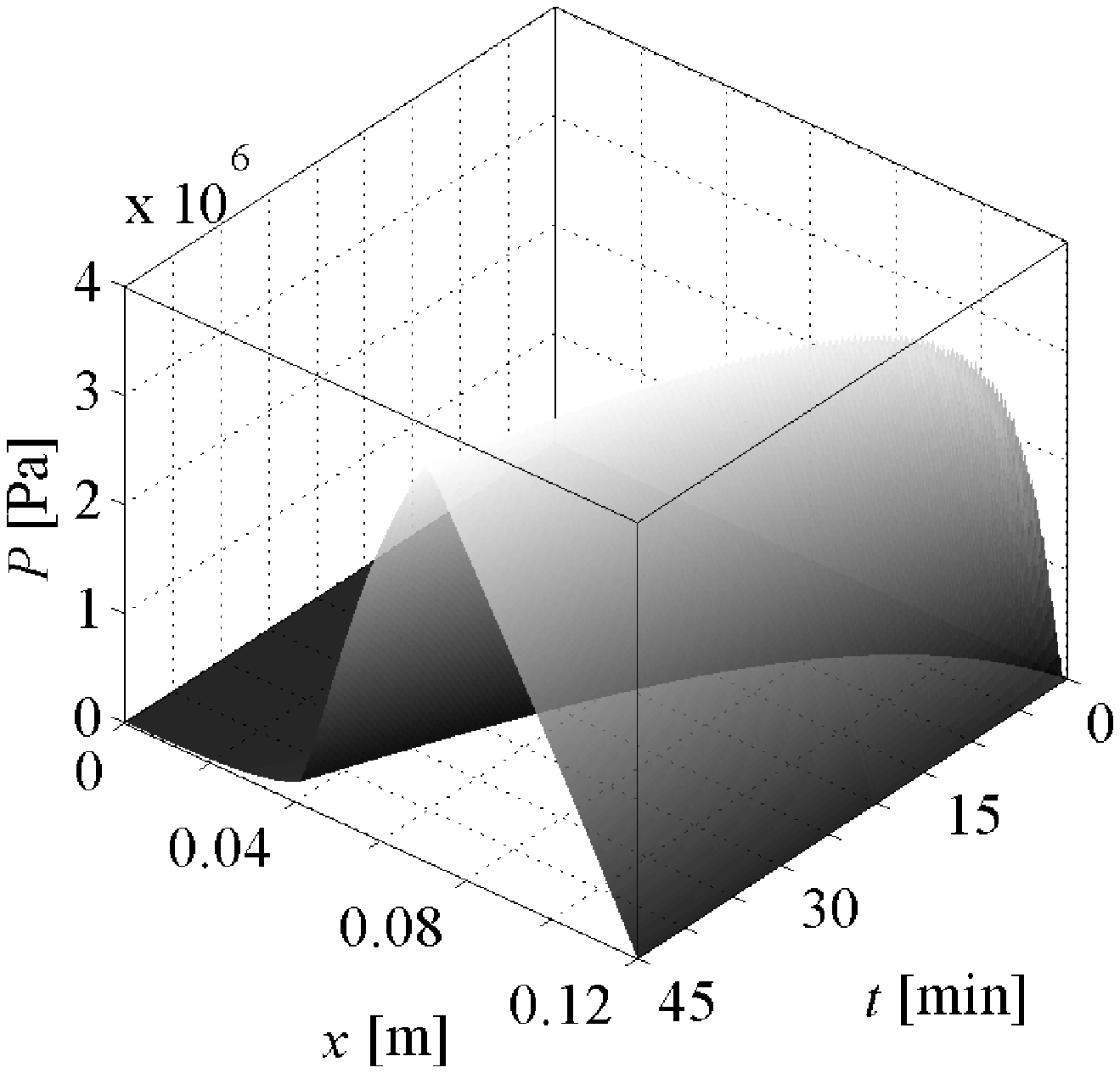}
\end{center}
\end{minipage}
\begin{minipage}[b]{.5\textwidth}
\begin{center}
\includegraphics[angle=0,width=7.0cm]{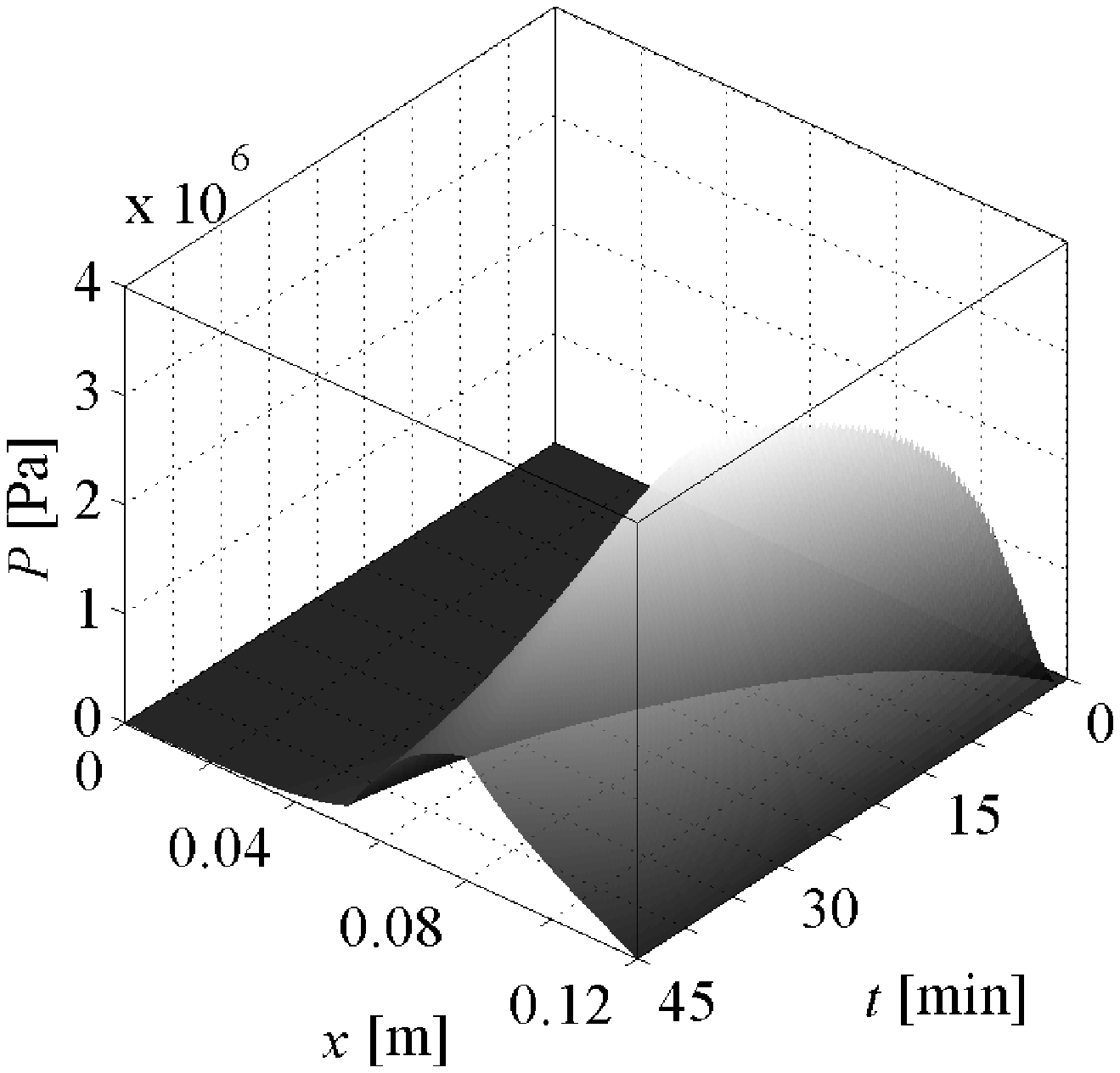}
\end{center}
\end{minipage}

\begin{minipage}[b]{.5\textwidth}
\begin{center}
\includegraphics[angle=0,width=7.0cm]{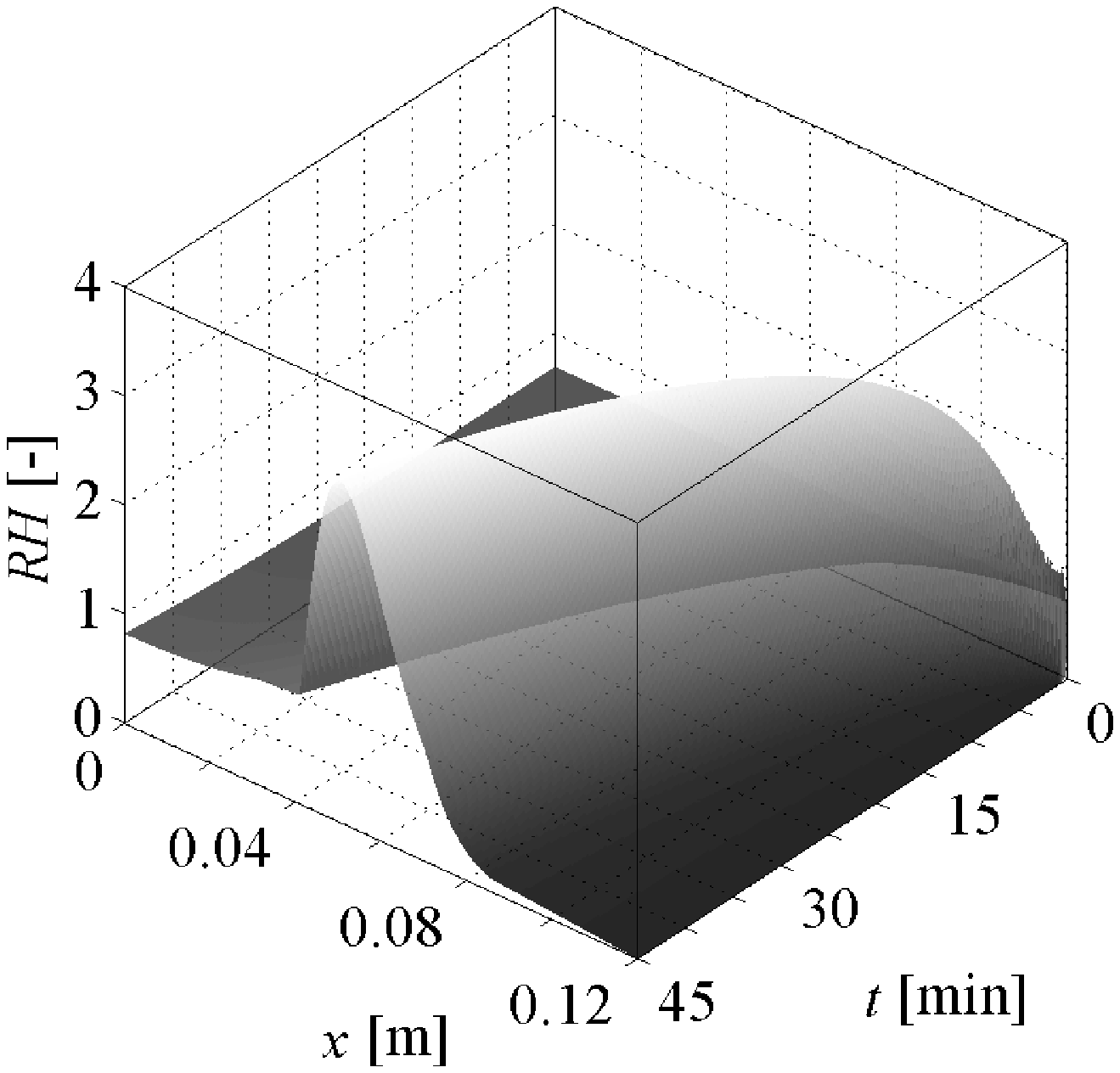}
\end{center}
\end{minipage}
\begin{minipage}[b]{.5\textwidth}
\begin{center}
\includegraphics[angle=0,width=7.0cm]{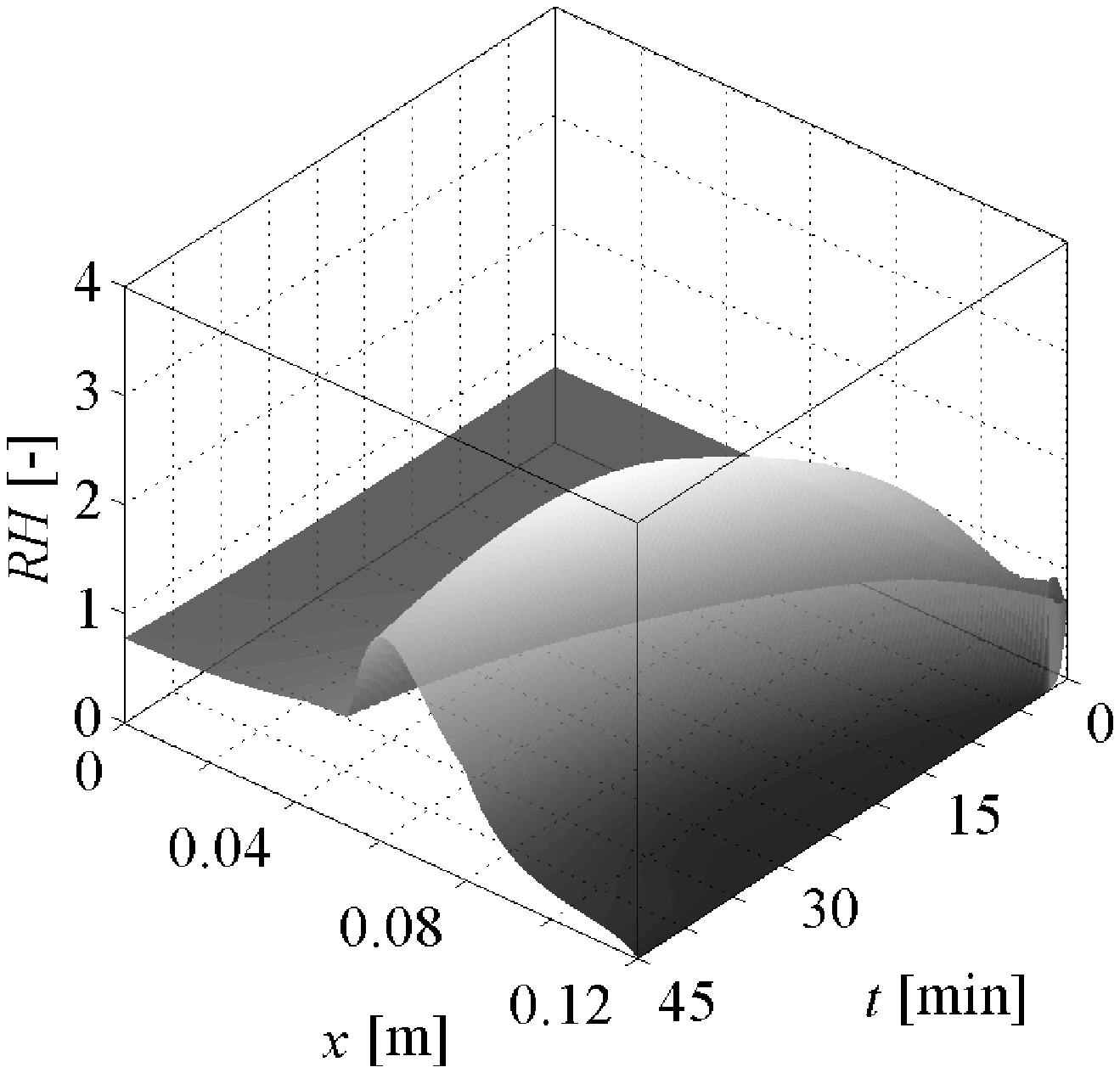}
\end{center}
\end{minipage}
\caption{Spatial and time distribution of temperature,
pore pressure and relative humidity for the analyzed wall
exposed to: HC fire (left) and PM fire (right)}\label{xt_Results}
\end{figure}

The results obtained for two different times of fire exposure are
given in Fig.~\ref{1D_Results}. It is obvious that the distribution
of the thermo-hygral quantities in concrete at high temperatures is
dependent on the type of fire scenario used for simulation. For
the nominal fires, where the fire temperature is a monotonically
increasing function of time (in our case the ISO fire and HC fire
curve), the temperature distribution across the analyzed wall
increases with time for the whole fire exposure. Contrary to this
for the PM fire, the temperature within the structure follows not
only the heating phase of a fire but also the cooling period, which
may lead to the decrease of temperature in some parts of a
structure, see Fig.~\ref{1D_Results} and Fig.~\ref{xt_Results}. This
holds also for
 the peak values of other thermo-hygral quantities, see Fig.~\ref{1D_Results}
 and Fig.~\ref{xt_Results}.

It is obvious, that the usage of the PM fire curve (or other natural
fire scenario) provides a better prediction of a real behavior of a
structure exposed to fire compared to the results obtained by
assuming some of the nominal fires (e.g. ISO fire or HC fire curve).
On the other hand, the nominal fire curves are still widely used due
to their simplicity and also due to the fact that, in most cases,
they lead to conservative results.

As shown in Figs.~\ref{1D_Results}~and~\ref{xt_Results},
the rapid heating of a structure and the related moisture
migration induce the increase of pore pressure near the heated surface.
In some cases, this pore pressure build up may lead to the spalling
of concrete surface layer, and hence, to the eventual collapse
of a structure, see~\cite{Benes2011}.

%%%%%%%%%%%%%%%%%%%%%%%%%%%%%%%%%%%%%%%%%%%%%%%%%%%%%%%%%%

\bigskip

\paragraph{Acknowledgement.}
This outcome has been achieved with the financial support of the
Ministry of Education, Youth and Sports of the Czech Republic,
project No.~1M0579, within activities of the CIDEAS research centre.
Additional support from the grant 201/09/1544 provided by the Czech
Science Foundation and the grant SGS11/001/OHK1/1T/11 provided by
the Grant Agency of the Czech Technical University in Prague is
greatly acknowledged.

% Use this code if you wish to generate your bibliography with BibTeX
% (please replace first the string "demo" below with the name(s) of
% the BibTeX data base(s) you want to use):
%
% \bibliographystyle{zamm-title}
% \bibliography{demo}
%
% In the final submission please submit also your bib-file(s),
% or - in case that this is very long - the resulting
% bibliography-output (the .bbl file).
%
% If you do not use BibTeX, replace the following example
% bibliography with your references:

\end{document}